\pgfplotsset{compat=1.9}% supress warning
\newcommand*{\bdiv}{%
  \nonscript\mskip-\medmuskip\mkern5mu%
  \mathbin{\operator@font div}\penalty900\mkern5mu%
  \nonscript\mskip-\medmuskip
}
\newcommand{\vars}{\ensuremath{X}\xspace}
\newcommand{\varss}[1]{\ensuremath{\vars_{#1}}\xspace}
\newcommand{\alive}{Alive\xspace}
\newcommand{\instcombine}{\emph{instcombine}\xspace}
\newcommand{\llvm}{LLVM\xspace}
\newcommand{\cpp}{C++\xspace}
\newcommand{\nsw}{\emph{nsw}\xspace}
\newcommand{\nuw}{\emph{nuw}\xspace}
\newcommand{\exact}{\emph{exact}\xspace}
\newcommand{\cvcfour}{CVC4\xspace}
\newcommand{\coq}{Coq\xspace}
\newcommand{\isabelle}{Isabelle\xspace}
\newcommand{\sygus}{SyGuS\xspace}
\newcommand{\zthree}{Z3\xspace}
\newcommand{\vampire}{Vampire\xspace}
\newcommand{\solidity}{Solidity\xspace}
\newcommand{\bi}{\begin{itemize}}
\newcommand{\ei}{\end{itemize}}
\newcommand{\be}{\begin{enumerate}}
\newcommand{\ee}{\end{enumerate}}
\newcommand{\bd}{\begin{description}}
\newcommand{\ed}{\end{description}}
\newcommand{\set}[1]{\left\{#1\right\}}
\newcommand{\brackets}[1]{\left[#1\right]}
\newcommand{\til}{,\ldots,}
\newcommand{\tonat}[1]{{\brackets{#1}}_{\mathbb{N}}}
\newcommand{\toint}[1]{{\brackets{#1}}_{\mathbb{Z}}}
\newcommand{\smtlib}{SMT-LIB~2\xspace}
\newcommand{\ite}[3]{\ensuremath{\mathrm{ite}(#1,#2,#3)}\xspace}
\newcommand{\xmark}{\ding{53}}%
\newcommand{\lmark}{\raisebox{-.2ex}{{\ding{213}}}}
\newcommand{\rmark}{\raisebox{.30ex}{{\rotatebox[origin=c]{-180}{\lmark}}}}
\newcommand{\ltr}{\lmark\xspace}
\newcommand{\rtl}{\rmark\xspace}
\newcommand{\both}{{\color{blue}{$\checkmark$}}\xspace}
\newcommand{\none}{{\color{red}\xmark}\xspace}
\newcommand{\ltrci}{\ensuremath{\ltr_{\ci}}\xspace}
\newcommand{\ltrnoci}{\ensuremath{\ltr_{\text{no}\,\ci}}\xspace}
\newcommand{\I}{\ensuremath{{\mathcal{I}}}\xspace}
\newcommand{\J}{\ensuremath{{\mathcal{J}}}\xspace}
\newcommand{\Is}{\ensuremath{{I}}\xspace}
\newcommand{\fv}[1]{\ensuremath{\mathrm{FV}\kern-.15em(#1)}\xspace}
\newcommand{\bv}[2]{\ensuremath{#1_{[#2]}}\xspace}
\renewcommand{\vec}[1]{\ensuremath{\boldsymbol{#1}}\xspace}
\newcommand{\fsbv}{\ensuremath{\mathrm{BV}}\xspace}
\newcommand{\ia}{\ensuremath{\mathrm{IA}}\xspace}
\newcommand{\ufia}{\ensuremath{\mathrm{UF}\ia}\xspace}
\newcommand{\sig}{\ensuremath{\Sigma}\xspace}
\newcommand{\sigs}{\ensuremath{\sig^s}\xspace}
\newcommand{\sigf}{\ensuremath{\sig^f}\xspace}
\newcommand{\sigfsbv}{\ensuremath{\sig_{\fsbv}}\xspace}
\newcommand{\sigia}{\ensuremath{\sig_{\ia}}\xspace}
\newcommand{\tfsbv}{\ensuremath{T_{\kern -.1em\fsbv}}\xspace}
\newcommand{\tia}{\ensuremath{T_{\ia}}\xspace}
\newcommand{\tufia}{\ensuremath{T_{\ufia}}\xspace}
\newcommand{\logicufnia}{{UFNIA}\xspace}
\newcommand{\sort}{\ensuremath{\sigma}\xspace}
\newcommand{\sorts}{\ensuremath{S}\xspace}
\newcommand{\sorti}{\ensuremath{\sort^\I}\xspace}
\newcommand{\sortbv}[1]{\ensuremath{\sort_{[#1]}}\xspace}
\newcommand{\Inte}{\ensuremath{\mathsf{Int}}\xspace}
\newcommand{\sub}[3]{\ensuremath{#1\set{#2 \mapsto #3}}\xspace}
\newcommand{\bool}{\ensuremath{\mathsf{Bool}}\xspace}
\newcommand{\sorteddomain}[2]{{#1}^{#2}}
\newcommand{\teq}{\ensuremath{\approx}\xspace}
\newcommand{\tneq}{\ensuremath{\not\teq}\xspace}
\newcommand{\teqf}[2]{\ensuremath{{#1 \kern-.15em\teq\kern-.15em #2}}\xspace}
\newcommand{\tneqf}[2]{\ensuremath{{#1 \kern-.15em\tneq\kern-.15em #2}}\xspace}
\newcommand{\op}{\ensuremath{\diamond}\xspace}
\newcommand{\fsbvtheory}{{\mathbb{\fsbv}}}
\newcommand{\bvsymbolf}[2]{{#1}^{#2\scaleto{\fsbvtheory}{4pt}}}
\newcommand{\bvsymbol}[1]{\bvsymbolf{#1}{}\xspace}
\newcommand{\nattheory}{{\mathbb{N}}\xspace}
\newcommand{\natsymbol}[1]{{#1}^{\scaleto{\nattheory}{4pt}}}
\newcommand{\traf}{\ensuremath{\mathcal{T}}\xspace}
\newcommand{\ic}{\ensuremath{\phi_\mathrm{c}}\xspace}
\newcommand{\ci}{\ensuremath{\alpha_\mathrm{c}}\xspace}
\newcommand{\lequiv}{\Leftrightarrow}
\newcommand{\rel}{\ensuremath{\bowtie}\xspace}
\newcommand{\bvaddf}{\ensuremath{\bvsymbolf{+}{}}\xspace}
\newcommand{\bvsubf}{\ensuremath{\bvsymbolf{-}{}}\xspace}
\newcommand{\bvandf}{\ensuremath{\mathrel{\bvsymbolf{\&}{}}}\xspace}
\newcommand{\bvashrf}{\ensuremath{\mathop{\bvsymbolf{>\kern-.3em>_a}{\kern -.6em}}}\xspace}
\newcommand{\bvconcatf}{\ensuremath{\bvsymbolf{\circ}{}}\xspace}
\newcommand{\bvlshrf}{\ensuremath{\bvsymbolf{\mathop{>\kern-.3em>}}{\kern -.1em}}\xspace}
\newcommand{\bvmulf}{\ensuremath{\bvsymbolf{\cdot}{}}\xspace}
\newcommand{\bvorf}{\ensuremath{\bvsymbolf{\mid}{}}\xspace}
\newcommand{\bvxorf}{\ensuremath{\bvsymbolf{\oplus}{}}\xspace}
\newcommand{\bvshlf}{\ensuremath{\bvsymbolf{\mathop{<\kern-.3em<}}{}}\xspace}
\newcommand{\bvudivf}{\ensuremath{\bvsymbolf{\bdiv}{}}\xspace}
\newcommand{\bvsgef}{\ensuremath{\bvsymbolf{\ge_\mathrm{s}}{\kern-.4em}}\xspace}
\newcommand{\bvsgtf}{\ensuremath{\bvsymbolf{>_\mathrm{s}}{\kern-.4em}}\xspace}
\newcommand{\bvslef}{\ensuremath{\bvsymbolf{\le_\mathrm{s}}{\kern-.4em}}\xspace}
\newcommand{\bvsltf}{\ensuremath{\bvsymbolf{<_\mathrm{s}}{\kern-.4em}}\xspace}
\newcommand{\bvugef}{\ensuremath{\bvsymbolf{\geq_\mathrm{u}}{\kern-.4em}}\xspace}
\newcommand{\bvugtf}{\ensuremath{\bvsymbolf{>_\mathrm{u}}{\kern-.4em}}\xspace}
\newcommand{\bvultf}{\ensuremath{\bvsymbolf{<_\mathrm{u}}{\kern-.4em}}\xspace}
\newcommand{\bvulef}{\ensuremath{\bvsymbolf{\leq_\mathrm{u}}{\kern-.4em}}\xspace}
\newcommand{\bvuremf}{\ensuremath{\bvsymbolf{\bmod}{}}\xspace}
\newcommand{\bvnegf}{\ensuremath{\bvsymbolf{-}{}}\xspace}
\newcommand{\bvnotf}{\ensuremath{\bvsymbolf{{\sim}\,}{}}\xspace}
\newcommand{\bvmaxsf}{\ensuremath{\bvsymbolf{\mathrm{max}_\mathrm{s}}{\kern-.4em}}\xspace}
\newcommand{\bvminsf}{\ensuremath{\bvsymbolf{\mathrm{min}_\mathrm{s}}{\kern-.4em}}\xspace}
\newcommand{\bvadd}[2]{\ensuremath{#1\, \bvaddf\kern -.1em #2}\xspace}
\newcommand{\bvand}[2]{\ensuremath{#1 \bvandf\kern -.35em #2}\xspace}
\newcommand{\bvashr}[2]{\ensuremath{#1\, \bvashrf\kern -.1em #2}\xspace}
\newcommand{\bvconcat}[2]{\ensuremath{#1 \bvconcatf\kern -.1em #2}\xspace}
\newcommand{\bvextract}[3]{\ensuremath{#1[#2:#3]^{\scaleto{\fsbvtheory}{4pt}}}\xspace}
\newcommand{\bvlshr}[2]{\ensuremath{#1\, \bvlshrf\kern -.1em #2}\xspace}
\newcommand{\bvmul}[2]{\ensuremath{#1\, \bvmulf\kern -.1em #2}\xspace}
\newcommand{\bvneg}[1]{\ensuremath{\bvnegf\kern -.1em#1}\xspace}
\newcommand{\bvnot}[1]{\ensuremath{\bvnotf\kern -.1em#1}\xspace}
\newcommand{\bvor}[2]{\ensuremath{#1\, \bvorf\kern -.1em #2}\xspace}
\newcommand{\bvxor}[2]{\ensuremath{#1 \bvxorf\kern -.1em #2}\xspace}
\newcommand{\bvshl}[2]{\ensuremath{#1\, \bvshlf\kern -.1em #2}\xspace}
\newcommand{\bvslt}[2]{\ensuremath{#1\, \bvsltf #2}\xspace}
\newcommand{\bvsub}[2]{\ensuremath{#1 \bvsubf\kern -.1em #2}\xspace}
\newcommand{\bvudiv}[2]{\ensuremath{#1\, \bvudivf\kern -.1em #2}\xspace}
\newcommand{\bvule}[2]{\ensuremath{#1\, \bvulef #2}\xspace}
\newcommand{\bvurem}[2]{\ensuremath{#1\, \bvuremf\kern -.1em #2}\xspace}
\newcommand{\dist}[2]{\ensuremath{#1 \tneq #2}\xspace}
\newcommand{\equal}[2]{\ensuremath{#1 \teq #2}\xspace}
\newcommand{\bvmins}[1]{\ensuremath{\bvminsf_{[#1]}}\xspace}
\newcommand{\bvmaxs}[1]{\ensuremath{\bvmaxsf_{[#1]}}\xspace}
\newcommand{\bvnsymbolf}[2]{#1\xspace}
\newcommand{\bvnaddf}{\ensuremath{\bvnsymbolf{+}{}}\xspace}
\newcommand{\bvnsubf}{\ensuremath{\bvnsymbolf{-}{}}\xspace}
\newcommand{\bvnandf}{\ensuremath{\mathrel{\bvnsymbolf{\&}{}}}\xspace}
\newcommand{\bvnashrf}{\ensuremath{\mathop{\bvnsymbolf{>\kern-.3em>_a}{\kern -.6em}}}\xspace}
\newcommand{\bvnconcatf}{\ensuremath{\bvnsymbolf{\circ}{}}\xspace}
\newcommand{\bvnlshrf}{\ensuremath{\bvnsymbolf{\mathop{>\kern-.3em>}}{\kern -.1em}}\xspace}
\newcommand{\bvnmulf}{\ensuremath{\bvnsymbolf{\cdot}{}}\xspace}
\newcommand{\bvnorf}{\ensuremath{\bvnsymbolf{\mid}{}}\xspace}
\newcommand{\bvnxorf}{\ensuremath{\bvnsymbolf{\oplus}{}}\xspace}
\newcommand{\bvnshlf}{\ensuremath{\bvnsymbolf{\mathop{<\kern-.3em<}}{}}\xspace}
\newcommand{\bvnudivf}{\ensuremath{\bvnsymbolf{\bdiv}{}}\xspace}
\newcommand{\bvnsgef}{\ensuremath{\bvnsymbolf{\ge_s}{\kern-.4em}}\xspace}
\newcommand{\bvnsgtf}{\ensuremath{\bvnsymbolf{>_s}{\kern-.4em}}\xspace}
\newcommand{\bvnslef}{\ensuremath{\bvnsymbolf{\le_s}{\kern-.4em}}\xspace}
\newcommand{\bvnsltf}{\ensuremath{\bvnsymbolf{<_s}{\kern-.4em}}\xspace}
\newcommand{\bvnugef}{\ensuremath{\bvnsymbolf{\geq_u}{\kern-.4em}}\xspace}
\newcommand{\bvnugtf}{\ensuremath{\bvnsymbolf{>_u}{\kern-.4em}}\xspace}
\newcommand{\bvnultf}{\ensuremath{\bvnsymbolf{<_u}{\kern-.4em}}\xspace}
\newcommand{\bvnulef}{\ensuremath{\bvnsymbolf{\leq_u}{\kern-.4em}}\xspace}
\newcommand{\bvnuremf}{\ensuremath{\bvnsymbolf{\bmod}{}}\xspace}
\newcommand{\bvnnegf}{\ensuremath{\bvnsymbolf{-}{}}\xspace}
\newcommand{\bvnnotf}{\ensuremath{\bvnsymbolf{{\sim}\,}{}}\xspace}
\newcommand{\bvnmaxsf}{\ensuremath{\bvnsymbolf{\text{max}_s}{\kern-.4em}}\xspace}
\newcommand{\bvnminsf}{\ensuremath{\bvnsymbolf{\text{min}_s}{\kern-.4em}}\xspace}
\newcommand{\bvnadd}[2]{\ensuremath{#1\, \bvnaddf\kern -.1em #2}\xspace}
\newcommand{\bvnand}[2]{\ensuremath{#1 \bvnandf\kern -.15em #2}\xspace}
\newcommand{\bvnashr}[2]{\ensuremath{#1\, \bvnashrf\kern -.1em #2}\xspace}
\newcommand{\bvnconcat}[2]{\ensuremath{#1 \bvnconcatf\kern -.1em #2}\xspace}
\newcommand{\bvnlshr}[2]{\ensuremath{#1\, \bvnlshrf\kern -.1em #2}\xspace}
\newcommand{\bvnmul}[2]{\ensuremath{#1\, \bvnmulf\kern -.1em #2}\xspace}
\newcommand{\bvnneg}[1]{\ensuremath{\bvnnegf\kern -.1em#1}\xspace}
\newcommand{\bvnnot}[1]{\ensuremath{\bvnnotf\kern -.1em#1}\xspace}
\newcommand{\bvnor}[2]{\ensuremath{#1\, \bvnorf\kern -.1em #2}\xspace}
\newcommand{\bvnxor}[2]{\ensuremath{#1 \bvnxorf\kern -.1em #2}\xspace}
\newcommand{\bvnsge}[2]{\ensuremath{#1\, \bvnsgef #2}\xspace}
\newcommand{\bvnsgt}[2]{\ensuremath{#1\, \bvnsgtf #2}\xspace}
\newcommand{\bvnshl}[2]{\ensuremath{#1\, \bvnshlf\kern -.1em #2}\xspace}
\newcommand{\bvnsle}[2]{\ensuremath{#1\, \bvnslef #2}\xspace}
\newcommand{\bvnslt}[2]{\ensuremath{#1\, \bvnsltf #2}\xspace}
\newcommand{\bvnsub}[2]{\ensuremath{#1 \bvnsubf\kern -.1em #2}\xspace}
\newcommand{\bvnudiv}[2]{\ensuremath{#1\, \bvnudivf\kern -.1em #2}\xspace}
\newcommand{\bvnuge}[2]{\ensuremath{#1\, \bvnugef #2}\xspace}
\newcommand{\bvnugt}[2]{\ensuremath{#1\, \bvnugtf #2}\xspace}
\newcommand{\bvnule}[2]{\ensuremath{#1\, \bvnulef #2}\xspace}
\newcommand{\bvnult}[2]{\ensuremath{#1\, \bvnultf #2}\xspace}
\newcommand{\bvnurem}[2]{\ensuremath{#1\, \bvnuremf\kern -.1em #2}\xspace}
\newcommand{\true}{\ensuremath{\top}\xspace}
\renewcommand{\v}{\vee}
\newcommand{\w}{\wedge}
\newcommand{\Ra}{\Rightarrow}
\newcommand{\bvselect}[2]{{#1}\brackets{#2}}
\newcommand{\twotothef}{\ensuremath{\mathsf{pow2}}\xspace}
\newcommand{\twotothe}[1]{\ensuremath{\twotothef(#1)}\xspace}
\newcommand{\maptointvar}{\ensuremath{\chi}\xspace}
\newcommand{\intaddf}{\ensuremath{\natsymbol{+}}\xspace}
\newcommand{\intsubf}{\ensuremath{\natsymbol{-}}\xspace}
\newcommand{\intandf}{\ensuremath{\mathrel{\natsymbol{\&}\kern-.3em}}\xspace}
\newcommand{\intashrf}{\ensuremath{\mathop{\natsymbol{>\kern-.3em>_a}}}\xspace}
\newcommand{\intconcatf}{\ensuremath{\natsymbol{\circ}}\xspace}
\newcommand{\intlshrf}{\ensuremath{\natsymbol{\mathop{>\kern-.3em>}}}\xspace}
\newcommand{\intmulf}{\ensuremath{\natsymbol{\cdot}}\xspace}
\newcommand{\intorf}{\ensuremath{\natsymbol{\mid}}\xspace}
\newcommand{\intxorf}{\ensuremath{\natsymbol{\oplus}}\xspace}
\newcommand{\intshlf}{\ensuremath{\natsymbol{\mathop{<\kern-.3em<}}}\xspace}
\newcommand{\intsltf}{\ensuremath{\natsymbol{<_s}}\xspace}
\newcommand{\intudivf}{\ensuremath{\natsymbol{\bdiv}}\xspace}
\newcommand{\intugef}{\ensuremath{\natsymbol{\geq_u}}\xspace}
\newcommand{\inturemf}{\ensuremath{\natsymbol{\bmod}}\xspace}
\newcommand{\intnegf}{\ensuremath{\natsymbol{-}}\xspace}
\newcommand{\intnotf}{\ensuremath{\natsymbol{{\sim}\,}}\xspace}
\newcommand{\intmaxf}{\ensuremath{\natsymbol{\max}}\xspace}
\newcommand{\intadd}[3]{\intaddf(#1,#2,#3)}
\newcommand{\intmul}[3]{\intmulf(#1,#2,#3)}
\newcommand{\inturem}[3]{\inturemf(#1,#2,#3)}
\newcommand{\intudiv}[3]{\intudivf(#1,#2,#3)}
\newcommand{\intsub}[3]{\intsubf(#1,#2,#3)}
\newcommand{\intand}[3]{\intandf(#1,#2,#3)}
\newcommand{\intxor}[3]{\intxorf(#1,#2,#3)}
\newcommand{\intnot}[2]{\intnotf(#1, #2)}
\newcommand{\intneg}[2]{\intnegf(#1, #2)}
\newcommand{\intshl}[3]{\intshlf(#1,#2,#3)}
\newcommand{\intlshr}[3]{\intlshrf({#1},{#2},{#3})}
\newcommand{\intconcat}[3]{\intconcatf(#1,#2,#3)}
\newcommand{\intslt}[3]{\intsltf(#1, #2, #3)}
\newcommand{\intmax}[1]{\intmaxf_{#1}}
\newcommand{\utsf}{\ensuremath{\mathsf{uts}}\xspace}
\newcommand{\uts}[2]{\ensuremath{\utsf_{#1}(#2)}\xspace}
\newcommand{\hextractf}{\ensuremath{\mathsf{ex}}\xspace}
\newcommand{\hselect}[3]{\ensuremath{\hextractf_{#2}(#3)}\xspace}
\renewcommand{\mod}{\ensuremath{\bmod}\xspace}
\renewcommand{\div}{\ensuremath{\bdiv}\xspace}
\newcommand{\full}{\ensuremath{\mathrm{full}}\xspace}
\renewcommand{\partial}{\ensuremath{\mathrm{partial}}\xspace}
\newcommand{\qf}{\ensuremath{\mathrm{qf}}\xspace}
\newcommand{\combined}{\ensuremath{\mathrm{combined}}\xspace}
\newcommand{\aaxiom}[1]{\ensuremath{\mathrm{AX}_{\kern-.2em#1}}\xspace}
\newcommand{\axiom}[2]{\ensuremath{\mathrm{AX}_{#1}^{#2}}\xspace}
\newcommand{\abs}[1]{\ensuremath{\lvert{#1}\lvert}\xspace}
\newcommand{\trfullf}{\traf_{\full}\xspace}
\newcommand{\trcombinedf}{\traf_{\combined}\xspace}
\newcommand{\trpartialf}{\traf_{\partial}\xspace}
\newcommand{\trqff}{\traf_{\qf}\xspace}
\newcommand{\trfull}[1]{\trfullf(#1)\xspace}
\newcommand{\trcombined}[1]{\trcombinedf(#1)\xspace}
\newcommand{\trpartial}[1]{\trpartialf(#1)\xspace}
\newcommand{\trqf}[1]{\trqff(#1)\xspace}
\newcommand{\parvars}{X^\ast\xspace}
\newcommand{\parconsts}{Z^\ast\xspace}
\newcommand{\parvar}{\mathsf{x}\xspace}
\newcommand{\parconst}{\mathsf{z}\xspace}
\newcommand{\instpar}[3]{#1|_{#2[#3]}\xspace}
\newcommand{\runconvert}{\textsc{Conv}\xspace}
\newcommand{\runelim}{\textsc{Elim}\xspace}
\newcommand{\omegabw}{\omega^b\xspace}
\newcommand{\omegaval}{\omega^N\xspace}
\newcommand{\bvi}[1]{{#1}\scaleto{\fsbvtheory}{4pt}}
\newcommand{\ufiai}[1]{{#1}\scaleto{\nattheory}{4pt}}
\newsavebox{\algorithmbox}
\begin{document}

%Titles:
%\newcommand\mytitle{Coping with bit-width Independence: A Case Study}
\newcommand\mytitle{Towards Bit-Width-Independent Proofs \\ in SMT Solvers}

% ajr: my suggestion:
% Towards bit-width-Independent Proofs in SMT Solvers
% I don't think our paper is a "case study", it introduces new techniques
% and thus is a research paper. Also, I believe we should include SMT solvers
% in the title since that is what is specific to our techniques. If we wanted
% even more detail to the title, I'd recommend something with non-linear
% integer arithmetic.

\title{\mytitle\thanks{This work was supported in part by DARPA (awards N66001-18-C-4012 and FA8650-18-2-7861),
ONR (award N68335-17-C-0558),
NSF (award 1656926),
and the Stanford Center for Blockchain Research.
}
}

\hypersetup{
 pdfborder={0 0 0},
 colorlinks=true,
 linkcolor=blue,
 urlcolor=blue,
 citecolor=blue,
 pdfauthor={Yoni Zohar et al.},
 pdftitle=\mytitle
}
\titlerunning{\mytitle}

 \author{%
   Aina Niemetz\inst{1}
   \and
   Mathias Preiner\inst{1}
   \and
   Andrew Reynolds\inst{2}
   \and
   Yoni Zohar\inst{1}
   \and\\
   Clark Barrett\inst{1}
   \and
   Cesare Tinelli\inst{2}
}

%\authorrunning{F. Author et al.}
% First names are abbreviated in the running head.
% If there are more than two authors, 'et al.' is used.
%
\institute{{Stanford University, Stanford, USA} \and
{The University of Iowa, Iowa City, USA}}
\maketitle              % typeset the header of the contribution
\begin{abstract}
%SMT-solvers typically solve bit-vector formulas 
%by associating a boolean variable to every bit of every bit-vector, and then apply SAT-solving techniques.
%This approach, known as ``bit-blasting",
%cannot be lifted to reason about bit-vectors of arbitrary width, 
%as this would involve infinitely large SAT-instances.
Many SMT solvers implement efficient 
SAT-based procedures for solving fixed-size
bit-vector formulas.
These approaches, however, cannot be used directly
to reason about bit-vectors of symbolic bit-width.
To address this shortcoming,
we propose %an approach for verifying that
%bit-vector formulas are valid
%independent of the bit-width being considered.
%This is done via 
a translation from bit-vector formulas with parametric bit-width
to formulas in a logic supported by SMT solvers that includes
non-linear integer arithmetic, uninterpreted functions, and
universal quantification.
While this logic is undecidable, this approach can still solve many formulas by
capitalizing on
advances in SMT solving for non-linear arithmetic and universally
quantified formulas.
We provide several case studies in which we have applied
this approach with promising results,
including the bit-width independent verification 
of invertibility conditions, compiler optimizations,
and bit-vector rewrites.
%\keywords{Satisfiability Modulo Theories \and Bit-Vectors \and Bit-width Independence \and Rewrite Rules}
\end{abstract}

\section{Introduction}

Satisfiability Modulo Theories (SMT) solving for the theory of fixed-size bit-vectors
has received a lot of interest in recent years.
Many applications rely on bit-precise reasoning as provided by SMT solvers, and
the number of solvers that participate in the corresponding divisions of the
annual SMT competition is high and increasing.
Although theoretically difficult (e.g.,~\cite{DBLP:journals/mst/KovasznaiFB16}),
bit-vector solvers are in practice highly efficient
and typically implement SAT-based procedures.
%These techniques, however, can not be lifted to bit-vectors of symbolic size.
%
Reasoning about fixed-size bit-vectors suffices for many applications.
In hardware verification,
the size of a circuit is usually known in advance,
and in software verification,
machine integers are treated as fixed-size bit-vectors, where the width
depends on the underlying architecture.
Current solving approaches, however, do not generalize
beyond this limitation,
i.e., they cannot reason about
parametric circuits or
machine integers of arbitrary size.
This is a serious limitation when one wants to prove properties
that are bit-width independent.
Further, when reasoning about machine integers of a fixed but large size,
as employed, for example, in smart contract languages such as \solidity \cite{Solidity},
current approaches do not perform as well
in the presence of expensive operations such as multiplication \cite{DBLP:series/txtcs/KroeningS16}.

To address this limitation we propose a general method 
for reasoning about bit-vector formulas with parametric bit-width.
The essence of the method is
to replace the translation from fixed-size bit-vectors to propositional logic
(which is at the core of state-of-the-art bit-vector solvers) 
with a translation to the quantified theories of integer arithmetic
and uninterpreted functions.
We obtain a fully automated verification process
by capitalizing on
recent advances in SMT solving for these theories.

The reliability of our approach depends on the correctness
of the SMT solvers in use.
Interactive theorem provers, or proof assistants,
such as \isabelle and \coq~\cite{nipkow2002isabelle,coqref},
on the other hand,
target applications where trust is of higher importance than automation,
although
substantial progress towards increasing the latter
has been made in recent years~\cite{DBLP:journals/jar/BlanchetteBP13}.
Our long-term goal is
an efficient automated framework for proving bit-width independent properties
within a trusted proof assistant,
which requires
both a formalization of such properties in the language of the proof assistant
and the development of efficient automated techniques
to reason about these properties.
This work shows that state-of-the-art SMT solving combined with 
our encoding techniques make the latter feasible.
The next steps towards this goal are described
in the final section of this paper.

%As part of future work, we plan to investigate proof generation
%and reconstruction using
%the bit-vector library for \coq from~\cite{ekici2017smtcoq}, which allows to 
%express both fixed and non-fixed width bit-vector formulas 
%and reason about them in \coq.

Translating a formula from the theory of fixed-size bit-vectors
to the theory of integer arithmetic
is not straightforward.
This is due to the fact that the semantics of bit-vector operators are defined
modulo the bit-width $n$, which must be expressed using exponentiation terms
$2^n$.
Most SMT solvers, however, do not support unrestricted exponentiation.
Furthermore, operators such as bit-wise $\mathit{and}$ and $\mathit{or}$
%and left and right $\mathit{shift}$
do not have a natural representation in integer arithmetic.
%%CT This comment is unnecessary. The theory of string is not a suitable candidate
%%% for many reasons 
%These operations, as well as concatenation and extraction, 
%seem more suitable for the theory of strings, 
%whose combination with quantified formulas 
%is not well supported by current SMT solvers.
%
While they are definable in the theory of integer arithmetic using
$\beta$-function encodings (e.g.,~\cite{enderton2001mathematical}), such a
translation is expensive
as it requires an encoding of sequences into natural numbers.
Instead,
we introduce an uninterpreted function (UF) for each of the problematic operators
and axiomatize them with quantified formulas,
which shifts some of the burden from arithmetic to UF reasoning.
We consider two alternative axiomatizations:
a complete one relaying on induction,
and a partial (hand-crafted) one that can be understood as an under-approximation.
%As we shall see in subsequent sections, the latter approach, while being incomplete, provides 
%better results when proving bit-width independent properties.
%(however, its potential in refuting such properties is limited).

To evaluate the potential of our approach, we examine three case studies 
that arise from real applications where reasoning about bit-width independent properties
is essential.
Niemetz et al.~\cite{invcav18} defined invertibility conditions for bit-vector
operators, which they then used to solve quantified bit-vector formulas.
However, correctness of the conditions was only checked for specific bit-widths:
from $1$ to $65$.
As a first case study, we consider the bit-width independent verification
of these invertibility conditions,
which~\cite{invcav18} left to future work.
As a second case study,
we examine the bit-width independent verification of compiler optimizations
in \llvm.
For that, we use the \alive tool~\cite{alive15}, which
generates verification conditions for 
such optimizations in the theory of fixed-size bit-vectors.
Proving the correctness of these optimizations 
for arbitrary bit-widths would
ensure their correctness for any language and underlying architecture rather than specific ones.
As a third case study,
we consider the bit-width independent verification of rewrite rules for
the theory of fixed-size bit-vectors.
SMT solvers for this theory heavily rely on such rules to simplify the input.
Verifying their correctness is essential
and is typically done by hand, which is both tedious and error-prone.

To summarize, this paper makes the following contributions.
\begin{itemize}
  \item In \Cref{sec:encodings},
    we study complete and incomplete encodings of bit-vector formulas with
    parametric bit-width into integer arithmetic.
  \item In \Cref{sec:casestudy},
    we evaluate the effectiveness of both encodings in three case studies.
  \item As part of the invertibility conditions case study, 
    we introduce {\em conditional inverses} for
    bit-vector constraints, thus augmenting
    \cite{invcav18}
    with concrete parametric solutions.
\end{itemize}

\begin{table}[t]
  \centering
  % !TEX root =../main.tex
{%
  \renewcommand{\arraystretch}{1.2}%
  \begin{tabular}{l@{\hspace{1.7em}}l@{\hspace{1.7em}}l}
    \hline
    \textbf{Symbol} & \textbf{SMT-LIB Syntax} & \textbf{Sort} \\
    \hline

    \teq, \tneq
    & =, distinct
    & $\sort_{[n]} \times \sort_{[n]} \to \bool$
%    & Predicates
    \\
    \bvultf, \bvugtf, \bvsltf, \bvsgtf
    & bvult, bvugt, bvslt, bvsgt
    & $\sort_{[n]} \times \sort_{[n]} \to \bool$
    \\
    \bvulef, \bvugef, \bvslef, \bvsgef
    & bvule, bvuge, bvsle, bvsge
    & $\sort_{[n]} \times \sort_{[n]} \to \bool$
%    & Predicates
    \\
    \bvnotf, \bvnegf
    & bvnot, bvneg
    & $\sort_{[n]} \to \sort_{[n]}$
    \\
    \bvandf, \bvorf, \bvxorf
    & bvand, bvor, bvxor
    & $\sort_{[n]} \times \sort_{[n]} \to \sort_{[n]}$
    \\
    \bvshlf, \bvlshrf, \bvashrf
    & bvshl, bvlshr, bvashr
    & $\sort_{[n]} \times \sort_{[n]} \to \sort_{[n]}$
%    & Bitwise operators
    \\
    \bvaddf, \bvmulf, \bvuremf, \bvudivf
    & bvadd, bvmul, bvurem, bvudiv
    & $\sort_{[n]} \times \sort_{[n]} \to \sort_{[n]}$
%    & Arithmetic operators
    \\
    \bvextract{}{u}{l}
    & extract ($0 \le l \le u < n$)
    & $\sort_{[n]} \to \sort_{[u-l+1]}$ 
    \\
    \bvconcatf
    & concatenation
    & $\sort_{[n]} \times \sort_{[m]} \to \sort_{[n+m]}$ 
%    & Extraction
%    \\
%    \bvmaxf, \bvmaxsf
%    & maximum (un)signed value
%    & $\sort_{[n]}$
%    \\
%    \bvminsf 
%    & minimum signed value
%    & $\sort_{[n]}$
    \\
    \hline
  \end{tabular}%
}

  \vspace{1ex}
  \caption{Considered bit-vector operators with \smtlib syntax.}
  \label{tab:bvops}
  \vspace*{-4ex}
\end{table}

\paragraph{Related Work}
Bit-width independent bit-vector formulas were studied by Picora 
\cite{PichoraPhd}, 
who introduced a formal language for bit-vectors of parametric width, 
along with a semantics and a decision procedure.
The language we use here is a simplified variant of that language.
A unification-based algorithm for bit-vectors of symbolic lengths is discussed by Bj{\o}rner and Picora~\cite{pichorabjorner98}.
Bit-width independent formulas are related to parametric Boolean functions and circuits. 
An inductive approach 
for reasoning about such formalisms was
developed by Gupta and Fisher~\cite{Gupta:1993:RSM:259794.259827,10.1007/3-540-56922-7_3}
by considering a Boolean function for the base case of a circuit and another one for its inductive step.
Reasoning about equivalence of such circuits can be embedded
in the framework of~\cite{PichoraPhd}.

\section{Preliminaries}
\label{sec:prelim}
  We briefly review 
  the usual notions and terminology of many-sorted first-order logic
  with equality (denoted by $\teq$).
  See~\cite{enderton2001mathematical,10.1007/978-3-540-30227-8_53} for more
  detailed information.
  Let \sorts be a set of \emph{sort symbols},
  and for every sort ${\sort \in \sorts}$,
  let~\varss{\sort} be an infinite set of \emph{variables of sort \sort}.
  We assume that sets \varss{\sort} are pairwise disjoint
  and define \vars as the union of sets \varss{\sort}.
  A \emph{signature} \sig 
  consists of
  a set
  $\sigs\!\subseteq \sorts$
  of sort symbols and
  a set
  $\sigf$
  of function symbols.
    Arities of function symbols are defined in the usual way.
  Constants are treated as $0$-ary functions.
  We assume that \sig includes a Boolean sort \bool
  and the Boolean constants $\top$ (true) and $\bot$ (false).  Functions
  returning \bool are also called \emph{predicates}.

  We assume the usual definitions of
  well-sorted terms, literals, and formulas, and refer to them as
   \sig-terms, \sig-literals, and \sig-formulas, respectively.
%
  %For a sort $\sort$ of a signature $\sig$,
  %We denote the
  %set of $\sig$-terms of sort $\sort$ by $\terms{\sig}{\sort}$.
%  $\sig$-formulas are built from $\sig$-terms in the usual way.
  %Together with
  %the set binary connectives $\w,\v,\Ra,\Leftrightarrow$, the unary connective
  %$\neg$, the nullary  connectives (truth constants) $\true$ and $\false$, and
  %the quantifiers $\forall$ and $\exists$.
  %
  %\rem{Quantified formulas}
  %
%  A \emph{ground} \sig-term (\sig-formula) is a \sig-term (\sig-formula)
%  without variables.
  %
  We define $\vec{x} = (x_1, ..., x_n)$ as a tuple of variables
  and write $Q\vec{x}\varphi$ with $Q \in \{ \forall, \exists \}$
  for a \emph{quantified} formula $Qx_1 \cdots Qx_n \varphi$.
  %
%  We use \lits{\varphi} to denote the set of \sig-literals of 
%  \sig-formula $\varphi$.
  %
  For a \sig-term or \sig-formula $e$,
  we denote the \emph{free variables} of $e$ (defined as usual) as \fv{e} and
  use $e[\vec{x}]$ to denote that the variables in \vec{x} occur free in $e$.
  For a tuple of \sig-terms $\vec{t} = (t_1, ..., t_n)$ and 
  a tuple of $\sig$-variables $\vec{x} = (x_{1}\til x_{n})$,
  we write $\sub{e}{\vec{x}}{\vec{t}}$ for the term or formula obtained from $e$
  by simultaneously replacing each occurrence of $x_i$ in $e$ by $t_i$.

  A \emph{\sig-interpretation} $\I$ maps:
  each $\sort \in \sigs$ to a distinct non-empty set of values \sorti
  (the \emph{domain} of \sort in \I);
  each $x \in \varss{\sort}$ to an element ${x^\I \in \sorti}$;
  and each $f^{\sort_1 \cdots \sort_n \sort} \in \sigf$ to a total function
  $f^\I\!\!: \sort_1^\I \times ... \times \sort_n^\I \to \sorti$ if $n > 0$,
  and to an element in \sorti if $n = 0$.
  %each $f \in \sigf$ to a total function
  %$f^{\I}:\sort_{1}\til\sort_{n}\to\sort$, where $\sort_{1}\til\sort_{n},\sort$
  %are determined by the arity of $f$,
%  each $P^{\sort_1 \cdots \sort_n} \in \sigp$ to a total function
%  $P^\I\!\!: \sort_1^\I \times ... \times \sort_n^\I \to \bool$ if $n > 0$,
%  and to an element in $\set{\true,\false}$ if $n = 0$.
  %and each $P\sort_1 \cdots \sort_n}\in\sigp$ to a subset
  %$P^{\I}$ of $\sort_{1}\times\ldots\times\sort_{n}$,
  %where $\sort_{1}\til\sort_{n}$ are determined by the arity of $P$.
  %
%  For variable $x \in X_\sigma$ and value $v \in \sigma^\I$,
%  we denote by $\I[x \mapsto v]$
%  the interpretation that maps $x$ to $v$ and is otherwise identical to~ $\I$.
  We use the usual inductive definition of a satisfiability relation $\models$
  between \sig-interpretations and \sig-formulas.

  %
  %For tuples of \sig-terms
  %$\vec{s} = (s_1, ..., s_n)$
  %and 
  %$\vec{t} = (t_1, ..., t_n)$
  %we write $e[x]\{\vec{s} \slash \vec{t}\}$
  %for the term or formula obtained from $e$
  %by simultaneously replacing each occurrence of $s_i$ in $e$ by $t_i$.
%

  A \emph{theory}~$T$ is a pair $(\sig, \Is)$,
  where \sig is a signature
  and \Is is a non-empty class of \sig-interpretations
  that is closed under variable reassignment, i.e.,
  if interpretation $\I'$ only differs from an $\I\in\Is$
  in how it interprets variables, then also ${\I'\in\Is}$.
  A \sig-formula $\varphi$ is \emph{$T$-satisfiable}
  (resp.~\emph{$T$-unsatisfiable})
  if it is satisfied by some (resp.~no) interpretation in \Is;
  it is \emph{$T$-valid} if it is satisfied by all interpretations in \Is.
  We will sometimes omit $T$ when the theory is understood from context.

  %\rem{Theory of Bit-vectors}
  %
  The theory $\tfsbv = (\sigfsbv, \Is_{\fsbv})$
  of fixed-size bit-vectors
  as defined in the \smtlib standard~\cite{SMTLib2010}
  consists of the class of interpretations $\Is_{\fsbv}$
  and signature $\sigfsbv$, which includes
  a unique sort for each positive 
  integer~$n$ (representing the bit-vector width),
  denoted here as $\sortbv{n}$.
  %
%  Similarly,~\varsbv{n}
%  is the set of \emph{bit-vector variables} of sort~$\sortbv{n}$,
%  and $\varss{\fsbv}$ 
%  is the union of all sets \varsbv{n}.
  %
  For a given positive integer $n$,  the domain $\sorteddomain{\sortbv{n}}{\I}$ of
  sort $\sortbv{n}$ in $\I$ is the set of all bit-vectors of size $n$.
  We assume that \sigfsbv includes all \emph{bit-vector constants}
  of sort $\sortbv{n}$ for each $n$, represented as bit-strings.
  However, to simplify the notation we will sometimes denote them by the corresponding
  natural number in $\{0, \ldots, 2^{n}-1\}$.
  All interpretations $\I \in \Is_{\fsbv}$ are identical except for the value
  they assign to variables.
  %, and thus, this theory consists of all
  %interpretations that extend a single $\sigfsbv$-model, which interprets the
  %sort and function symbols as specified in $\smtlib$.
  They interpret sort and function symbols as specified in \smtlib.
  All function symbols (of non-zero arity) in $\sigf_\mathrm{BV}$ are
  overloaded for every $\sortbv{n}\! \in \sigs_\mathrm{BV}$.
  We denote
  a \sigfsbv-term (or \emph{bit-vector term})~$t$ of width $n$ as \bv{t}{n}
  when we want to specify its bit-width explicitly.
  %
  %A bit-vector can be thought of as a function from $\set{1\til k}$ to
  %$\set{0,1}$ for some $k>0$ called the {\em bit-width}. Put differently, it is
  %a sequence of length $k$ over $\set{0,1}$.
  We refer to the $i$-th bit of $t_{[n]}$ as $t[i]$ with $0\leq i < n$.
  We interpret $t[0]$ as the least significant bit (LSB),
  and $t[n-1]$ as the most significant bit (MSB),
  and denote bit ranges over $k$ from index $j$ down to $i$ as
  $t[j:i]$.
  The unsigned interpretation of a bit-vector $t_{[n]}$ as a natural number is given by
  $\tonat{t} = \Sigma_{i=0}^{n-1}\bvselect{t}{i}\cdot 2^i$, and its signed interpretation as
  an integer is given by $\toint{t} =
  -\bvselect{t}{n-1}\cdot 2^{n-1}+\tonat{\bvextract{t}{n-2}{0}}$.

  %In this paper we overload all function and predicate symbols in $\sigfsbv$,
  %e.g., by writing $\bvandf$ to represent
  %$\bvandf_{1}:\set{0,1}\times\set{0,1}\to\set{0,1},
  %\bvandf_{2}:\set{0,1}^{2}\times\set{0,1}^{2}\to\set{0,1}^{2},\ldots$.  The
  %exact identify of each symbol will be clear from the context.
  %
  %For any natural number $n>1$ and a \sigfsbv-term, we write
  %\bv{t}{n}
  %to specify that $t\in\varss{\sortbv{n}}$.
  % and represent bit-vector values as integers with their width as subscripts,
  % e.g., \bv{1}{4} for $0001$, and \bv{-1}{4} or \bv{\bvnot{0}}{4} for $1111$.
  %
  %We will omit the bit-width if it is clear from the context.

  Without loss of generality,
  we consider a restricted set of bit-vector function and predicate symbols
  (or \emph{bit-vector operators})
  as listed in \Cref{tab:bvops}.
  The selection of operators in this set is arbitrary
  but complete in the sense that
  it suffices to express
  all bit-vector operators defined in \smtlib.
    We use $\bvmaxs{k}$ ($\bvmins{k}$)
  for the \emph{maximum} or \emph{minimum signed value} of width $k$,
  e.g., 
  $\bvmaxs{4}=0111$ and $\bvmins{4}=1000$.
  %
%  The width of a bit-vector sort or term
%  is given by the function \bwf,
%  e.g., $\bw{\sortbv{n}} = n$ and $\bw{\bv{t}{n}} = n$.

  The theory $\tia= (\sigia, \Is_{\ia})$ of integer arithmetic is also defined as
  in the \smtlib standard.
  The signature $\sigia$ includes a single sort  $\Inte$,
  function and predicate symbols
  $\set{+, -, \cdot, \div,\allowbreak \mod, \abs{...}, <, \leq, >, \geq}$,
  and
  a constant symbol for every integer value.
  We further extend \sigia to include
  exponentiation, denoted in the usual way as $a^b$.
  All interpretations $\I\in \Is_{\ia}$ are identical except for the values
  they assign to variables.
  %, and so they all also extend  a single
  %$\sigia$-model, that interprets the sort,
%  They interpret functions and predicate symbols as specified in $\smtlib$.
  %Given a set of function symbols $F$ such that for each $f\in F$, the domain
  %is $\Inte\times\ldots\times\Inte$ and the range is $\Inte$, 
  %the theory $\tufia{F}=(\sigufia{F},\Is_{\ufia{F}})$ is defined as follows:
  %the signature $\sigufia{F}$ is obtained from $\sigia$ by adding the function
  %symbols of $F$, and the class $\Is_{\ufia{F}}$ consists of all
  %interpretations $\I$ whose $F$-free part is in $\Is_{\ia}$.
  %\ajr{Introduce $\mod$, $\div$ as symbols of the signature. Introduce exponentiation.} done (an)
%  A \emph{linear} integer arithmetic term or formula is a \sigia-term that does
%  not contain multiplication.
%  We denote the set of (positive) natural numbers as $\nat$ ($\pnat$).
  %The theory $\tiap$ of linear integer arithmetics is obtained from $\tia$ by dismissing the $\cdot$ symbol.
We write $\tufia$ to denote the (combined) theory 
of uninterpreted functions with integer arithmetic.
Its signature is the union of the signature of $\tia$
with a signature containing a set of (freely interpreted) function symbols,
called \emph{uninterpreted functions}.
%We consider one such set in Section~\ref{sec:encodings}.
%  \ajr{Explain $\ufia$}
  
\subsection{Parametric Bit-Vector Formulas}
\label{parabitvecform}
We are interested in reasoning about (classes of) $\sigfsbv$-formulas
that hold independently of the sorts assigned to their variables or terms.
We formalize the notion of parametric $\sigfsbv$-formulas in the following.

We fix two sets $\parvars$ and $\parconsts$ of variable and constant
symbols, respectively, of bit-vector sort of undetermined bit-width.
%\an{'bit-vector variable' and 'bit-vector constant' is not introduced yet,
%is x a first-order variable or a first-order constant?}
% ajr: the x in X* are symbols of arity 0, e.g., you can see them as x().
% the x_1, x_2, ... that are associated with x to are normal variables of bit-vector sort.
The bit-width is provided by the first component of a separate function pair 
$\omega = (\omegabw, \omegaval)$ which maps symbols 
$\parvar \in \parvars \cup \parconsts$ to $\sigia$-terms.
We refer to 
$\omegabw(\parvar)$
as the \emph{symbolic bit-width} assigned by $\omega$ to $\parvar$.
The second component of $\omega$ is a map $\omegaval$ from symbols
$\parconst \in \parconsts$ to $\sigia$-terms.
We call $\omegaval(\parconst)$ the \emph{symbolic value}
assigned by $\omega$ to $\parconst$.
Let $\vec{v}=\fv{\omega}$ be the set of
(integer) free variables occurring in the range of either $\omegabw$ or $\omegaval$.
%Let $\parvars$ be a set of (unsorted) variables.
We say that $\omega$ is \emph{admissible} if
for every interpretation $\I\in \Is_{\ia}$ that interprets each variable in 
$\vec{v}$ as a
positive integer, and for every $\parvar \in \parvars \cup \parconsts$,
$\I$ also interprets $\omegabw(\parvar)$ as a positive integer.

Let $\varphi$ be a formula built from the function symbols of $\sigfsbv$
and $\parvars \cup \parconsts$, ignoring their sorts.
We refer to $\varphi$ as a \emph{parametric $\sigfsbv$-formula}.
One can interpret $\varphi$ as a class of fixed-size 
bit-vector formulas as follows.
For each symbol $\parvar \in \parvars$ and % positive 
integer $n>0$, 
we associate a unique variable $x_n$
of (fixed) bit-vector sort $\sortbv{n}$.
Given an admissible $\omega$ with $\vec{v}=\fv{\omega}$
and an interpretation $\I$ that maps each variable in $\vec{v}$ to a positive integer,
let $\instpar{\varphi}{\omega}{\I}$
be the result of replacing 
all symbols $\parvar \in \parvars$ in $\varphi$ by the
corresponding bit-vector variable $x_k$
and all symbols $\parvar \in \parconsts$ in $\varphi$ by 
the bit-vector constant of sort $\sortbv{k}$corresponding to 
$\omegaval( \parvar )^\I \mod 2^k$, where
in both cases $k$ is the value of 
$\omegabw(\parvar)^\I$.
We say a formula $\varphi$ is \emph{well sorted under $\omega$}
if $\omega$ is admissible and
$\instpar{\varphi}{\omega}{\I}$ is a well-sorted $\sigfsbv$-formula
for all $\I$ that map variables in $\vec{v}$ to positive integers.

\begin{example}
  \label{ex:parex}
Let $\parvars$ be the set $\{ \parvar \}$ and $\parconsts$ be the set 
$\{ \parconst_0, \parconst_1 \}$, where $\omegaval(\parconst_0) = 0$ and $\omegaval(\parconst_1) = 1$.
Let $\varphi$
be the formula $\bvadd{(\bvadd{\parvar}{\parvar})}{\parconst_1} \tneq \parconst_0$.
We have that $\varphi$ is well sorted under $(\omegabw, \omegaval)$ with
$\omegabw = \{ \parvar \mapsto a, \parconst_0 \mapsto a, \parconst_1 \mapsto a \}$
or $\omegabw = \{ \parvar \mapsto 3, \parconst_0 \mapsto 3, \parconst_1 \mapsto 3 \}$.
%Notice that $\instpar{\varphi}{\omega_1}{w \mapsto 2}$
%is the formula $\bvadd{(\bvadd{x_2}{x_2})}{1} \tneq 0$
It is
not well sorted when $\omegabw = \{ \parvar \mapsto a_1, \parconst_0 \mapsto a_1, \parconst_1 \mapsto a_2 \}$
since $\instpar{\varphi}{\omega}{\I}$
is not a well sorted $\sigfsbv$-formula whenever $a_1^\I \not= a_2^\I$.
Note that an $\omega$ where $\omegabw(\parvar) = a_1 - a_2$
is not admissible, since $(a_1 - a_2)^\I \leq 0$ is possible even when $a_1^\I > 0$ and $a_2^\I > 0$.
\end{example}

%Notice that the signature of $\sigfsbv$ includes
%several operator that are parameterized by a natural number.
%For example, bit-vector extract is parameterized
%by two natural numbers $u$ and $l$.
%We extend the above definitions to
%incorporate parametric bit-vector formulas with these operators
%in the following way.
%In particular, we assume a class of
Notice that symbolic constants
such as the maximum unsigned constant
of a symbolic length $w$ can be represented
by introducing $\parconst \in \parconsts$ with
$\omegabw(\parconst) = w$ 
and $\omegaval(\parconst) = 2^w-1$.
Furthermore, recall that signature $\sigfsbv$ includes
the bit-vector extract operator, which is parameterized 
by two natural numbers $u$ and $l$.
We do not lift the above definitions to handle
extract operations having symbolic ranges, e.g., where $u$ and $l$ are
$\sigia$-terms.
This is for simplicity and comes at no loss of expressive power,
since constraints involving extract can be equivalently expressed using
constraints involving concatenation.
For example, showing that every instance of a constraint
$s \teq \bvextract{t}{u}{l}$ holds,
where $0 < l \leq u < n-1$, is equivalent to showing that 
$t \teq \bvconcat{ y_1 }{ (\bvconcat{ y_2 }{ y_3 }) } \Rightarrow
s \teq y_2$
holds for all $y_1, y_2, y_3$,
where $y_1, y_2, y_3$ have sorts $\sortbv{n-1-u}$, $\sortbv{u-l+1}$,
$\sortbv{l}$, respectively.
%where $y$ has sort $\sortbv{n}$,
%$y_{n,u+1}$ has sort $\sortbv{n-u}$,
%$x$ and $y_{u,l}$ have sort $\sortbv{u-l+1}$
%and $y_{l-1,0}$ has sort $\sortbv{l}$.
We may reason about 
a formula involving a symbolic range $\{l, \ldots, u\}$ of $t$
by considering a parametric bit-vector formula that encodes
a formula of the latter form, where the appropriate symbolic bit-widths
are assigned to symbols introduced for $y_1, y_2, y_3$.

We assume the above definitions for 
parametric $\sigfsbv$-formulas are applied to parametric
$\sigfsbv$-terms as well.
Furthermore, for any admissible $\omega$,
we assume $\omega$ can be extended to terms $t$ of bit-vector sort
that are well sorted under $\omega$
such that
$\instpar{t}{\omega}{\I}$
has sort
$\sortbv{\omegabw(t)^{\I}}$
for all $\I$ that map variables in $\fv{\omega}$ to positive integers.
%We extend $\omega$ to predicates $\bvsymbol{P}( t_1, \ldots, t_n )$
%such that 
Such an extension of $\omega$ to terms can be easily computed
in a bottom-up fashion by computing $\omega$ for each child and then applying
the typing rules of the operators in $\sigfsbv$.
For example, we may assume $\omegabw(t) = \omegabw(t_2)$
if $t$ is of the form $\bvadd{t_1}{t_2}$ and is well sorted under $\omega$,
and $\omegabw(t) = \omegabw(t_1) + \omegabw(t_2)$
if $t$ is of the form $\bvconcat{t_1}{t_2}$.

%\begin{example}
%\end{example}

Finally, we extend the notion of validity to parametric bit-vector
formulas. 
Given a formula $\varphi$ that is well sorted under $\omega$,
we say $\varphi$ is $\tfsbv$-valid under $\omega$
if $\instpar{\varphi}{\omega}{\I}$
is $\tfsbv$-valid 
for all $\I$ that that map variables in $\fv{\omega}$ to positive integers.

%In the following, we omit $\omega$ whenever it is clear from the context.

\section{Encoding Parametric Bit-Vector Formulas in SMT}
\label{sec:encodings}

Current SMT solvers do not support reasoning about
parametric bit-vector formulas.
In this section, we present a technique for encoding
such formulas
as formulas involving non-linear integer arithmetic,
uninterpreted functions, and universal quantifiers.
In SMT parlance, these are formulas in the \logicufnia logic.
%
%such that they can be handled by SMT solvers.
%It uses a combination of uninterpreted functions,
%non-linear integer arithmetic, and universal quantifiers.
%
Given a formula~$\varphi$
that is well sorted under some mapping $\omega$,
we describe this encoding in terms of a translation \traf,
which returns a formula $\psi$ that
%We describe this encoding in terms of a translation 
%$\traf$.
%Given input formula $\varphi$
%that is well sorted under some mapping $\omega$,
%the translation $\traf$ returns a formula $\psi$, such that 
is valid in the theory of
uninterpreted functions with integer arithmetic only if
$\varphi$ is $\tfsbv$-valid under $\omega$.
We describe several variations on this translation
and discuss their relative strengths and weaknesses.
\medskip

\noindent\textbf{Overall Approach} \ 
At a high level, our translation
produces an implication whose antecedent requires the integer variables 
to be in the correct ranges (e.g., $k>0$ for every bit-width variable $k$),
and whose conclusion is the result of 
converting each (parametric) bit-vector term of bit-width $k$
to an integer term.
Operations on parametric bit-vector terms are converted to operations
on the integers modulo $2^k$, 
where $k$ can be a symbolic constant.
We first introduce uninterpreted functions that 
will be used in our translation.
Note that SMT solvers may not support the full set of functions in our
extended signature \sigia, since they typically do not support exponentiation.
Since translation requires a limited form of exponentiation
%
%First, note that SMT solvers support a
%subset of the functions in theory of integers, which
%includes many of the operations and relations that are needed 
%to translate parametric bit-vector formulas into integer formulas.
%What is missing is the exponentiation function, 
%and in particular, the function $\lambda x.2^x$.
we introduce an uninterpreted function symbol $\twotothef$
of sort $\Inte \to \Inte$,
whose intended semantics is the function $\lambda x.2^x$ 
when the argument $x$ is non-negative.
Second,
for each (non-predicate) $n$-ary (with $n > 0$) function $\bvsymbol{f}\hskip-.3em$ of sort 
$\sigma_1 \times \ldots\times \sigma_n \to \sigma$
in the signature of fixed-size bit-vectors $\sigfsbv$
(excluding bit-vector extraction),
we introduce an uninterpreted function
$\natsymbol{f}$ of arity $n+1$ and sort
$\Inte \times \Inte \times \ldots \times \Inte \to \Inte$, where the extra
argument is used to specify the bit-width.
For example, for $\bvaddf$ with sort 
$\sortbv{n} \times \sortbv{n} \to \sortbv{n}$,
we introduce
$\intaddf$ of sort
$\Inte \times \Inte \times \Inte \to \Inte$.
In its intended semantics, 
this function adds the second and third arguments, both integers,
and returns the result modulo $2^k$, where $k$ is the first argument.
The signature $\sigfsbv$ contains one function,
bit-vector concatenation $\bvconcatf$, whose
two arguments may have different sorts.
For this case,
the first argument
of $\intconcatf$
indicates the bit-width of the third argument, i.e.,
$\intconcat{k}{x}{y}$ is interpreted
as the concatenation of $x$ and $y$,
where $y$ is an integer that encodes a bit-vector of bit-width $k$;
the bit-width for $x$ is not specified by an argument,
as it is not needed for the elimination of this operator we perform later.
%the first argument indicates the 
%Its first argument takes the integer value of the 
%bit-width to be considered, 
%and second and third arguments take the (integer)
%values that correspond to the encoding of the two arguments of an
%application of the original function $\bvaddf$.
We introduce uninterpreted functions for each bit-vector predicate
in a similar fashion.
For instance,
$\intugef$ has sort $\Inte \times \Inte \times \Inte \to \bool$
and encodes whether its second argument
is greater than or equal to its third argument,
when these
two arguments are interpreted as unsigned bit-vector values
whose bit-width is given by its first argument.
%Finally, we handle (binary) bit-vector concatenation as a special case.
%For this operator, we introduce
%$\intconcatf$ of sort $\Inte \times \Inte \times \Inte \times \Inte \to \Inte$,
%where the first and second arguments denote the bit-widths
%of the third and fourth arguments respectively.
Depending on the variation of the encoding,
our translation will either introduce quantified formulas that 
fully axiomatize the behavior of these uninterpreted
functions or add (quantified) lemmas that state key properties about them,
or both.
\medskip

\begin{figure}[t]
\begin{mdframed}
\noindent
$\traf_A$($\varphi$, $\omega$):\\[-4ex]
\begin{enumerate}
\item[] Return $\aaxiom{A}(\varphi,\omega) \Rightarrow \runconvert(\varphi,\omega)$.
\end{enumerate}
\vspace{-.5em}
\runconvert($e$, $\omega$):\\[-4ex]
\begin{enumerate}
\item[] Match $e$:
\item[] \begin{tabular}{lcll}
$\parvar$ & $\to$ & $\maptointvar(\parvar)$ & if $\parvar \in \parvars$ \\
$\parconst$ & $\to$ & $\omegaval(\parconst) \mod \twotothef(\omegabw(\parconst))$ & if $\parconst \in \parconsts$ \\
$t_1 \teq t_2$ & $\to$ & $\runconvert(t_1, \omega) \teq \runconvert(t_2, \omega)$ \\
%$\bvsymbol{P}( t_1, t_2 )$ & $\to$ & $\runelim^\ast(\natsymbol{P}( \omega(t_1), \runconvert(t_1, \omega), \runconvert(t_2, \omega) ))$ \\
%$\bvconcat{t_1}{t_2}$ & $\to$ & $\runelim^\ast(\intconcatf( \omega(t_1), \omega(t_2), \runconvert(t_1, \omega), \runconvert(t_2, \omega) ))$ \\
$\bvsymbol{f}( t_1, \ldots, t_n )$ & $\to$ & $\runelim(\natsymbol{f}( \omegabw(t_n), \runconvert(t_1, \omega), \ldots, \runconvert(t_n, \omega) ))$ \\
%%CT this is CADE, the empty box is untouchable as it means the empty clause
${\bowtie}( \varphi_1, \ldots, \varphi_n )$ & $\to$ & ${\bowtie}(\runconvert(\varphi_1, \omega), \ldots, \runconvert(\varphi_n, \omega))$ & \hspace{-5em}${\bowtie}\, \in \{ \w,\v,\Ra,\neg,\Leftrightarrow \}$\\ 
\end{tabular}
\end{enumerate}
\vspace{-.3em}
\runelim($e$):\\[-4ex]
\begin{enumerate}
\item[] Match $e$:
\item[] \begin{tabular}{lcll}
$\intadd{k}{x}{y}$ & $\to$ & $(x+y) \mod \twotothe{k}$ \\
$\intsub{k}{x}{y}$ & $\to$ & $(x-y) \mod \twotothe{k}$\\
$\intmul{k}{x}{y}$ & $\to$ &  $(x\cdot y) \mod \twotothe{k}$\\
$\intudiv{k}{x}{y}$ & $\to$ &  $\ite{\equal{y}{0}}{ \twotothe{k}-1}{ x \div y }$ \\
$\inturem{k}{x}{y}$ & $\to$ &  $\ite{\equal{y}{0}}{ \twotothe{k}-1}{ x \mod y }$ \\
%$\intmax{k}$  & $\to$ &$\twotothe{k}-1$ \\ 
$\intnot{k}{x}$ & $\to$ & $\twotothe{k} - (x+1)$\\
  $\intneg{k}{x}$ & $\to$ & $(\twotothe{k} - x) \mod \twotothe{k}$\\
$\intshl{k}{x}{y}$ & $\to$ & $(x\cdot\twotothe{y}) \mod \twotothe{k}$\\
$\intlshr{k}{x}{y}$ & $\to$ &$(x\div\twotothe{y}) \mod \twotothe{k}$\\
%$\intashr{k}{x}{y}$  & $\to$ & $\ite{\equal{\hselect{k}{k-1}{x}}{ 0} }
%                                    {{\intlshr{k}{x}{y}} }
%                                    { }$\\
%                     & & $\intnot{k}{\intlshr{k}{\intnot{k}{x}}{y}}$ \\
%$\intextract{k}{i}{j}{x}$ & $\to$ & $(x \div \twotothe{j}) \mod \twotothe{i-j+1}$\\
%$\hselect{k}{i}{x}$ & $\to$ & $\intextract{k}{i}{i}{x}$\\
$\intconcatf(k, x, y )$ & $\to$ & $x \cdot \twotothe{k} + y$\\
$\natsymbol{\bowtie}_u(k,x,y)$ & $\to$ & $x \bowtie y$ & $\bowtie\, \in \{ <, \leq, >, \geq \}$ \\
$\natsymbol{\bowtie}_s(k,x,y)$ & $\to$ & $\uts{k}{x} \bowtie \uts{k}{y}$ & $\bowtie\, \in \{ <, \leq, >, \geq \}$ \\
%$\intult{k}{x}{y}$ & $\to$ & $x < y$ & \\
%$\intule{k}{x}{y}$ & $\to$ & $x \leq y$ \\
%$\intugt{k}{x}{y}$ & $\to$ & $x > y$ \\
%$\intuge{k}{x}{y}$ & $\to$ & $x \geq y$ \\
%$\intslt{k}{x}{y}$ & $\to$ & $\uts{k}{x} < \uts{k}{y}$ \\
%$\intsle{k}{x}{y}$ & $\to$ & $\uts{k}{x} \leq \uts{k}{y}$ \\
%$\intsgt{k}{x}{y}$ & $\to$ & $\uts{k}{x} > \uts{k}{y}$ \\
%$\intsge{k}{x}{y}$ & $\to$ & $\uts{k}{x} \geq \uts{k}{y}$ \\
$e$ & $\to$ & $e$ & otherwise
\vspace{-.5em}
\end{tabular}
\end{enumerate}
\end{mdframed}
  \vspace{-2ex}
  \caption{Translation $\traf_A$ for parametric bit-vector formulas,
  parametrized by axiomatization mode~$A$.
We use 
$\uts{k}{x}$ as shorthand for
%$(2\cdot\hextract{k}{k-2}{0}{x})- x$.
$2\cdot(x \mod \twotothe{k-1})- x$.
  }
  \label{fig:translation}
  \vspace*{-3ex}
\end{figure}

\noindent\textbf{Translation Function} \ 
Figure~\ref{fig:translation}
defines our translation function $\traf_A$,
which is parameterized by an axiomatization mode $A$.
Given an input formula $\varphi$
that is well sorted under $\omega$,
it returns the implication whose 
antecedant is an \emph{axiomatization} formula $\aaxiom{A}(\varphi,\sigma)$
and whose conclusion is
the result of converting $\varphi$ to its encoded version
via the conversion function $\runconvert$.
The former is dependent 
upon the axiomatization mode $A$ which we discuss later.
We assume without loss of generality that $\varphi$ contains
no applications of bit-vector extract, which can be eliminated
as described in the previous section, nor does it
contain concrete bit-vector constants, since these can be
equivalently represented by introducing a symbol in $\parconsts$
with the appropriate concrete mappings in $\omegabw$ and $\omegaval$.

In the translation, 
we use an auxiliary function $\runconvert$
which converts parametric bit-vector expressions
into integer expressions with uninterpreted functions.
Parametric bit-vector variables $\parvar$
(that is, symbols from $\parvars$) are replaced
by unique integer variables of type $\Inte$,
where we assume a mapping $\maptointvar$ maintains
this correspondence,
such that range of $\maptointvar$ does not include any variable 
that occurs in  $\fv{\omega}$.
Parametric bit-vector constants
$\parconst$
(that is, symbols from set $\parconsts$) are replaced 
by the term 
$\omegaval(\parconst) \mod \twotothef(\omegabw(\parconst))$.
The ranges of the maps in $\omega$
may contain arbitrary $\sigia$-terms.
In practice, 
our translation handles only cases where these terms
contain symbols 
%in the subset of the integer theory 
supported
by the SMT solver, as well as terms of the form $2^t$, which we assume
are replaced by $\twotothe{t}$ during this translation.
For instance, if $\omegabw(\parconst) = w+v$ and
$\omegaval(\parconst) = 2^w-1$,
then $\runconvert(z)$ returns $(\twotothe{w}-1) \mod \twotothe{w+v}$.
Equalities are processed by recursively running the 
translation on both sides.
The next case handles 
%applications of 
symbols from the signature
$\sigfsbv$, 
where symbols $\bvsymbol{f}$ are replaced
with the corresponding uninterpreted function $\natsymbol{f}$.
We take as the first argument $\omegabw(t_n)$,
indicating the symbolic bit-width of the last argument of $e$,
and recursively call $\runconvert$ on $t_1, \ldots, t_n$.
%Due to our assumptions on our input $\varphi$, 
%we have that $n \geq 1$ and $f$ is not bit-vector extraction.
In all cases, $\omegabw(t_n)$ corresponds 
to the bit-width that the uninterpreted function 
$\natsymbol{f}$ expects based on its intended semantics
(the bit-width of the second argument for bit-vector concatenation,
or of an arbitrary argument for all other functions and predicates).
%but vary on
%how $\omega$ is used.
%If $e$ is an application of bit-vector concatenation,
%we take $\omega(t_1)$ and $\omega(t_2)$ as the first two arguments,
%indicating the symbolic bit-width of its two arguments.
%If $e$ is an application of any other $\sigfsbv$-symbol,
%we take $\omega(t_1)$ as the first argument of the return value,
%indicating the symbolic bit-width of its first argument.
%Notice due to our assumptions on our input $\varphi$, 
%we have that $n \geq 1$ and $\omega(t_1)$ corresponds to the bit-width
%of the symbol for all symbols we consider in this case
%(which excludes concatenation and extraction by assumption).
%which notice in all cases where $e$ is a (non-concatenation) bit-vector term 
%is the overall bit-width of $e$.
%In both cases, we recursively call the conversion function on all
%arguments.
Finally, if the top symbol of $e$ is a Boolean connective 
we apply the conversion function recursively to all its children.

%The subprocedure $\runelim$ is subsequently run on
%which we assume is run multiple times as necessary until
%a fixed point is reached.
We run $\runelim$ for all applications of
uninterpreted functions $\natsymbol{f}$ introduced during the conversion,
which eliminates functions that correspond
 to a majority of the bit-vector operators.
These functions can be equivalently
expressed using integer arithmetic 
and $\twotothef$.
%In that procedure,
%each bit-vector variable is translated to an integer variable with the same name.
%Most of the eliminated operators are handled according to their semantics, 
%as it is described in $\smtlib$.
The ternary addition operation $\intaddf$,
that represents addition of two bit-vectors with 
their width $k$ specified as the first argument, is translated to
integer addition modulo $\twotothe{k}$.
Similar considerations are applied for 
$\intsubf$ and
$\intmulf$.
For $\intudivf$ and $\inturemf$,
our translation handles the special case where the second argument is zero,
where the return value in this case is the maximum value for the given
bit-width, i.e. $\twotothe{k}-1$.
The integer operators corresponding to unary (arithmetic) negation
and bit-wise negation can be eliminated in a straightforward way.
The semantics of various bitwise shift operators can be defined arithmetically
using division and multiplication with $\twotothe{k}$.
%Arithmetic shift right $\intashrf$ can be defined in terms of other
%bit-vector operators and hence can be eliminated based on this definition.
%The binary function $\intnotf$ that represents bitwise not over bit-vectors 
%with with $k$ as its first argument, is easily defined in terms of $\intnegf$.
%The semantics of various bitwise shift operators is defined arithmetically in $\smtlib$, 
%and their elimination is defined as specified by this semantics.
%The integer function that corresponds to Extraction is 
%obtained by dividing the original value by $2$
%$j$ times, in order to eliminate the first $j$ bits,
%and then taking the reminder from the remaining bits.
%Concatenation and the unsigned relation symbols are defined as expected.
Concatenation can be eliminated by multiplying its first argument $x$
by $\twotothe{k}$, where recall $k$ is the bit-width of the second arugment $y$.
In other words, it has the effect of shifting $x$ left by $k$ bits, 
as expected.
The unsigned relation symbols can be directly converted to the corresponding
integer relation.
For the elimination of signed relation symbols
we use an auxiliary helper $\utsf$ (unsigned to signed), 
defined in \Cref{fig:translation},
which returns the interpretation of its argument
when seen as a signed value.
The definition of $\utsf$ can be derived based on the semantics
for signed and unsigned bit-vector values in the SMT LIB standard.
Based on this definition,
we
have that integers $v$ and $u$ that encode bit-vectors
of bit-width $k$ satisfy $\intslt{k}{u}{v}$ if and only if $\uts{k}{u}<\uts{k}{v}$.
%Let $x$ be a non-negative integer, and $v$ be a bit-vector of width $k$ such that 
%$x=\tonat{v}$.
%Obviously, when looking at $v$ as an unsigned integer, it has integer value $x$.
%However, when looking at $v$ as a signed integer, 
%it may have a different integer value, that may be negative.
%$\uts{k}{x}$ return this integer value, that is
%$\uts{k}{x}=\toint{v}$.
%Its definition was obtained as follows:
%$\tonat{v}$ and
% $\toint{v}$ can be rewritten as
%$\tonat{v} = 2^{k-1}\cdot\hselect{k}{k-1}{v}+\intextract{k}{k-2}{v}$ and 
%$\toint{v} = -2^{n-1}\cdot\hselect{k}{k-1}{v}+\intextract{k}{k-2}{0}{v}$.
%Adding these two equations and simplifying, we get
%$\toint{v}=2\cdot\intextract{k}{k-2}{0}{v}-\tonat{v}=2\cdot\intextract{k}{k-2}{0}{v}-x$.
%We therefore have that the bit-vectors $v$ and $u$ of bit-width $k$ satisfy $\bvslt{u}{v}$ iff $\uts{k}{u}<\uts{k}{v}$.

As an example of our translation,
let $\varphi=\bvadd{(\bvadd{\parvar}{\parvar})}{\parconst_1} \tneq \parconst_0$,
$\omegaval(\parconst_0)=0$, $\omegaval(\parconst_1)=1$, and
$\omegabw(\parvar)=\omegabw(\parconst_0)=\omegabw(\parconst_1)=a$
from \Cref{ex:parex}.
$\runconvert(\varphi, (\omegabw,\omegaval))$ is
$\runelim(\intadd{a}{\runelim(\intadd{a}{\maptointvar(\parvar)}{\maptointvar(\parvar)})}{1\mod \twotothe{a}})\tneq 0 \mod \twotothe{a}$.
After applying $\runelim$ and simplifying, we get
$(\maptointvar(\parvar) + \maptointvar(\parvar) + 1) \mod \twotothe{a} \tneq 0$.

Thanks to $\runelim$,
we can assume that all formulas generated by $\runconvert$
contain only uninterpreted function symbols in the set 
$\{ \twotothef,\intandf,\intorf,\intxorf \}$.
Thus, we restrict our attention to these symbols only 
in our axiomatization $\aaxiom{A}$, described next.
\bigskip

\begin{table}[t]
\centering
\begin{tabular}{|c|l|}\hline
\op&\axiom{\full}{\op}\\\hline
$\twotothef$ &
$
\equal{\twotothe{0}}{1}
\w
\forall k.\, k>0\Ra \equal{\twotothe{k}}{2\cdot\twotothe{k-1} }
$\\\hline
$\intandf$ & 
$
\begin{array} {l}
\forall k,x,y.\,
\intand{k}{x}{y} \teq \\
\quad 
\ite{k>1}{\intand{k-1}{x \mod \twotothe{k-1}}{y \mod \twotothe{k-1}}}{0} + {}\\
\quad 
\twotothe{k-1}\cdot
\min(\hselect{k}{k-1}{x},
\hselect{k}{k-1}{y}
)
\end{array}
$\\\hline
%$\intorf$ & 
%$
%\begin{array} {l}
%\forall k,x,y.
%\intor{k}{x}{y}=\\
%ite(k>1, \intor{k-1}{x \mod \twotothe{k-1}}{y \mod \twotothe{k-1}},0) \\
%+
%\twotothe{k-1}\cdot
%\max(\hselect{k}{k-1}{x},
%\hselect{k}{k-1}{y}
%)
%\end{array}
%$\\\hline
$\intxorf$ & 
$
\begin{array} {l}
\forall k,x,y.\,
\intxor{k}{x}{y} \teq\\
\quad 
\ite{k>1}{\intxor{k-1}{x \mod \twotothe{k-1}}{y \mod \twotothe{k-1}}}{0} + {} \\
\quad 
\twotothe{k-1}\cdot
\abs{\hselect{k}{k-1}{x}-
\hselect{k}{k-1}{y}}
\end{array}
$\\\hline

\end{tabular}
  \vspace{1ex}
\caption{Full axiomatization of \twotothef, \intandf, and \intxorf.
The axiomatization of $\intorf$ is omitted, and is dual to that of $\intandf$.
  We use
$\hselect{k}{i}{x}$ for
$(x \div \twotothe{i}) \mod 2$.
%$\hextract{k}{i}{j}{x}$ as shorthand for
%$(x \div \twotothe{j}) \mod \twotothe{i-j+1}$, 
}
\label{fullax}
  \vspace*{-4ex}
\end{table}

\begin{table}[t]
\centering
\begin{tabular}{|l|l|l|}\hline
\op & axiom &$\axiom{\partial}{\op}$\\\hline
\multirow{5}{*}{$\twotothef$} &
base cases & 
$\equal{\twotothe{0}}{1}\w\equal{\twotothe{1}}{2}\w\equal{\twotothe{2}}{4}\w\equal{\twotothe{3}}{8}$ \\
& weak monotonicity &
$\forall i \forall j .\, i\leq j \Ra \twotothe{i}\leq\twotothe{j}$\\
& strong monotonicity &
$\forall i \forall j .\, i < j \Ra \twotothe{i} < \twotothe{j}$ \\
& modularity &
$\forall i \forall j \forall x .\, (x\cdot \twotothe{i}) \mod \twotothe{j} \tneq 0 \Ra i < j$ \\
& never even & 
$\forall i \forall x .\,  \twotothe{i}-1\tneq 2\cdot x$\\
& always positive &
$\forall i .\, \twotothe{i}\geq 1$ \\
& div 0 &
$\forall i .\, i \div \twotothe{i}\teq 0$ \\\hline

\multirow{7}{*}{$\intandf$} &
base case &  $\forall x\forall y.\, \equal{\intand{1}{x}{y}}{\min(\hselect{1}{0}{x}, \hselect{1}{0}{y})}$ \\
& max & 
$\forall k\forall x  .\, \equal{\intand{k}{x}{\intmax{k}}}{x}$
\\
& min & 
	$\forall k\forall x  .\, \equal{\intand{k}{x}{0}}{0}$
\\
&idempotence &
$\forall k\forall x.\, \equal{\intand{k}{x}{x}}{x}$ \\
& contradiction & 
	$\forall k \forall x .\, \equal{\intand{k}{x}{\intnot{k}{x}}}{0}$ \\
& symmetry &
	$\forall k \forall x \forall y .\, \equal{\intand{k}{x}{y}}{\intand{k}{y}{x}}$ \\
& difference &
$\forall k \forall x \forall y \forall z .\, x\tneq y\Ra\intand{k}{x}{z}\tneq y\v\intand{k}{y}{z}\tneq x$
\\
& range & 
$\forall k \forall x \forall y .\,  0\leq \intand{k}{x}{y}\leq \min(x,y)$\\\hline

%\multirow{7}{*}{$\intorf$} &
%	base case &  $\forall x\forall y.\equal{\intor{1}{x}{y}}{\max(\hselect{1}{0}{x}, \hselect{1}{0}{y})}$ \\
%& max & 
%	$\forall k\forall x  . \equal{\intor{k}{x}{\intmax{k}}}{\intmax{k}}$
%\\
%& min & 
%	$\forall k\forall x  . \equal{\intor{k}{x}{0}}{x}$
%\\
%&idempotence &
%$\forall k\forall x.\equal{\intor{k}{x}{x}}{x}$ \\
%& excluded middle & 
%	$\forall k \forall x . \equal{\intor{k}{x}{\intnot{k}{x}}}{\intmax{k}}$ \\
%& symmetry &
%	$\forall k \forall x \forall y . \equal{\intor{k}{x}{y}}{\intor{k}{y}{x}}$ \\
%& difference &
%$\forall k \forall x \forall y \forall z . x\tneq y\Ra\intor{k}{x}{z}\tneq y\v\intor{k}{y}{z}\tneq x$
%\\
%& range & 
%$\forall k \forall x \forall y .  \max(x,y)\leq \intor{k}{x}{y}\leq \intmax{k}$\\\hline

\multirow{5}{*}{$\intxorf$} &
	base case &  $\forall x\forall y.\, 
	\equal{\intxor{1}{x}{y}}{\ite{\equal{\hselect{1}{0}{x}}{\hselect{1}{0}{y}}}{0}{1}}$ \\
& zero & 
	$\forall k\forall x  .\, \equal{\intxor{k}{x}{x}}{0}$
\\
& one & 
	$\forall k\forall x  .\, \equal{\intxor{k}{x}{\intnot{k}{x}}}{\intmax{k}}$
\\
& symmetry &
	$\forall k \forall x \forall y .\, \equal{\intxor{k}{x}{y}}{\intxor{k}{y}{x}}$ \\
& range & 
$\forall k \forall x \forall y .\,  0\leq \intxor{k}{x}{y}\leq \intmax{k}$\\\hline
\end{tabular}
\vspace{1ex}
\caption{Partial axiomatization of \twotothef, \intandf, and \intxorf.
The axioms for $\intorf$ are omitted, and are dual to those for $\intandf$.
We use $\intmax{k}$ for $\twotothe{k}-1$.
}
\label{partialax}
  \vspace*{-2ex}
\end{table}

\noindent\textbf{Axiomatization Modes} \ 
We consider four different axiomatization modes $A$,
which we call 
\full, \partial, \combined, and \qf (quantifier-free).
For each of these axiomatizations,
we define $\aaxiom{A}(\varphi,\omega)$ 
as the conjunction:
\[
\bigwedge_{x \in \fv{\varphi}} \!\!\!\!0 \leq \maptointvar(x) < \twotothe{\omegabw(x)}
\wedge (\!\!\!\!\bigwedge_{w \in \fv{\omega}}\!\!\!\! w > 0)
\wedge \axiom{A}{\twotothef}
\wedge \axiom{A}{\intandf} 
\wedge \axiom{A}{\intorf}
\wedge \axiom{A}{\intxorf}
\]
The first conjunction states that all integer variables
introduced for parametric bit-vector variables $\parvar$
reside in the range specified by their bit-width.
The second conjunction states that all free variables in $\omega$
(denoting bit-widths) are positive.
The remaining four conjuncts
denote the axiomatizations for the four uninterpreted functions
that may occur in the output of the conversion function.
The definitions of these formulas
are given in Tables~\ref{fullax} and~\ref{partialax}
for \full and \partial respectively.
For each axiom, 
$i,j,k$ denote bit-widths and $x,y$ denote integers that encode
bit-vectors of size $k$.
We assume guards on all quantified formulas
(omitted for brevity) that constrain $i,j,k$ to be positive
and $x,y$ to be in the range $\{0, \ldots, \twotothe{k}-1\}$.
Each table entry lists a set of formulas (interpreted conjunctively)
that state properties about the intended semantics
of these operators.
The formulas for axiomatization mode \full assert
the intended semantics
of these operators, % in terms of a definition,
whereas those for \partial assert several properties of them.
Mode \combined asserts both, and mode \qf
takes only the formulas in \partial that are 
quantifier-free.
In particular, $\axiom{\qf}{\twotothef}$ corresponds to the base cases listed in \partial, and
$\axiom{\qf}{\op}$ for the other operators is simply $\true$.
The partial axiomatization of these operations mainly includes natural properties of them. 
For example, we include some base cases for each operation, and also the ranges of its inputs and output.
For some proofs, these are sufficient. 
For $\intandf$\ , $\intorf$ and $\intxorf$, we also included their behavior for specific cases, e.g., $\intand{k}{a}{0}=0$ and its variants. 
Other axioms (e.g., ``never even") were added after analyzing specific
benchmarks to identify sufficient axioms for their proofs.

Our translation satisfies the following key properties.

\begin{theorem}
\label{tra-correct}
Let $\varphi$ be a parameteric bit-vector formula that is well sorted under $\omega$
and has no occurrences of bit-vector extract or concrete bit-vector constants. Then:
\be
\item\label{full-item} $\varphi$ is $\tfsbv$-valid under $\omega$ if and only if $\trfull{\varphi, \omega}$ is $\tufia$-valid.
\item\label{combined-item} $\varphi$ is $\tfsbv$-valid under $\omega$ if and only if $\trcombined{\varphi, \omega}$ is $\tufia$-valid.
\item\label{partial-item} $\varphi$ is $\tfsbv$-valid under $\omega$ if $\trpartial{\varphi, \omega}$ is $\tufia$-valid.
\item\label{qf-item} $\varphi$ is $\tfsbv$-valid under $\omega$ if $\trqf{\varphi, \omega}$ is $\tufia$-valid.
\ee
\end{theorem}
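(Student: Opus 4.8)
The plan is to prove item~\ref{full-item} directly by a semantic model-correspondence argument and then obtain items~\ref{combined-item}--\ref{qf-item} as corollaries by comparing the strengths of the four antecedents. Everything rests on two facts about the full axiomatization. Fact~(i): the \emph{intended} interpretation of $\twotothef,\intandf,\intorf,\intxorf$ (namely $\lambda x.2^x$ on non-negative arguments and the bitwise operations on in-range arguments) satisfies $\axiom{\full}{\op}$ for each operator. Fact~(ii): conversely, every $\tufia$-interpretation satisfying $\aaxiom{\full}(\varphi,\omega)$ must interpret these four symbols by their intended semantics on the relevant domain. Fact~(ii) is established by induction on the bit-width argument: the base cases $\twotothe{0}=1$ and $\intand{1}{x}{y}=\min(\ldots)$ are fixed explicitly, and the recursive conjuncts determine the value at $k$ uniquely from the value at $k-1$.

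The core is the following correspondence. Fix an admissible $\omega$ and a $\tufia$-interpretation $\I$ satisfying $\aaxiom{\full}(\varphi,\omega)$; the guards in $\aaxiom{\full}$ ensure that $\I$ maps each $w\in\fv{\omega}$ to a positive integer and each $\maptointvar(\parvar)$ into $\{0,\dots,2^{k}-1\}$ with $k=\omegabw(\parvar)^{\I}$. Let $\J$ be the $\tfsbv$-interpretation assigning to the width-$k$ variable the bit-vector $b$ with $\tonat{b}=\maptointvar(\parvar)^{\I}$. I would then prove, by mutual structural induction on well-sorted parametric terms and formulas, that $\runconvert(t,\omega)^{\I}=\tonat{(\instpar{t}{\omega}{\I})^{\J}}$ and that $\I\models\runconvert(\varphi,\omega)$ iff $\J\models\instpar{\varphi}{\omega}{\I}$. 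The base cases are variables and symbolic constants, where $\omegaval(\parconst)\bmod\twotothe{\omegabw(\parconst)}$ reproduces the truncated constant. Each inductive step treats one operator: fact~(i) replaces the uninterpreted symbols by their intended meaning, and one checks that the $\runelim$-normalised encoding computes exactly $\tonat{\cdot}$ of the corresponding \smtlib operation --- e.g.\ $(x+y)\bmod\twotothe{k}$ for addition, $x\cdot\twotothe{k}+y$ for concatenation, the signed predicates reducing to unsigned ones after $\uts{k}{\cdot}$ computes $\toint{\cdot}$, and the zero-divisor conventions of \intudivf and \inturemf matched by their explicit $\ite{\cdot}{\cdot}{\cdot}$ cases.

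Item~\ref{full-item} then follows in both directions. For ``only if'', assume $\varphi$ is $\tfsbv$-valid under $\omega$ and take any $\I\models\aaxiom{\full}(\varphi,\omega)$; by fact~(ii) it has intended semantics, the associated $\J$ satisfies $\instpar{\varphi}{\omega}{\I}$ by validity, and the correspondence yields $\I\models\runconvert(\varphi,\omega)$. For ``if'', assume $\trfull{\varphi,\omega}$ is valid; any $\I$ sending $\fv{\omega}$ to positive integers together with an arbitrary bit-vector assignment is realised by the integer interpretation using intended semantics, which satisfies $\aaxiom{\full}$ by fact~(i), so $\I\models\runconvert(\varphi,\omega)$ transports back to satisfaction of $\instpar{\varphi}{\omega}{\I}$. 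The other items reduce to this one: the first two conjuncts of $\aaxiom{A}$ are identical for all $A$, and since $\aaxiom{\full}$ forces intended semantics under which every partial and quantifier-free operator axiom holds, we get $\aaxiom{\full}\models\aaxiom{\partial}$ and $\aaxiom{\full}\models\aaxiom{\qf}$, while $\aaxiom{\combined}=\aaxiom{\full}\wedge\aaxiom{\partial}$ is equivalent to $\aaxiom{\full}$. Hence validity of $\trpartial{\varphi,\omega}$ (resp.\ $\trqf{\varphi,\omega}$) implies validity of $\trfull{\varphi,\omega}$ --- any model of $\aaxiom{\full}$ also satisfies the weaker antecedent and therefore its conclusion --- giving items~\ref{partial-item} and~\ref{qf-item}; and item~\ref{combined-item} is immediate from $\aaxiom{\combined}\equiv\aaxiom{\full}$.

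The main obstacle is fact~(ii) together with the operator-by-operator heart of the correspondence. Establishing that $\aaxiom{\full}$ genuinely pins down $\intandf,\intorf,\intxorf$ needs a nested induction on the bit-width in which one argues that the quantified recursive axiom has a unique in-range solution; and the structural induction must get every modular-arithmetic identity exactly right --- the shift operators (truncated multiplication and division by $\twotothe{y}$), the signed comparisons routed through $\utsf$, and the division/remainder corner cases --- so that each matches the \smtlib semantics bit for bit. By contrast, verifying fact~(i), the entailments $\aaxiom{\full}\models\aaxiom{\partial},\aaxiom{\qf}$, and the bookkeeping that the range guards in $\aaxiom{\full}$ exactly match the encoding $b\mapsto\tonat{b}$ of width-$k$ bit-vectors is routine.
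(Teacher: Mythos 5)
Your proposal is correct and follows essentially the same route as the paper's own proof: a model correspondence between $\tfsbv$- and $\tufia$-interpretations established by structural induction on terms (with a nested induction on bit-width showing the full axioms force the intended semantics of $\twotothef$, $\intandf$, $\intorf$, $\intxorf$), followed by the observation that the partial and quantifier-free axioms are entailed by the full ones, which reduces items 2--4 to item 1.
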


The proof of Property 1 is carried out by translating 
every interpretation 
$\bvi{\I}$ 
of $\tfsbv$
into a corresponding interpretation 
$\ufiai{\I}$ 
of $\tufia$
such that
$\bvi{\I}$ satisfies $\varphi$ iff $\ufiai{\I}$ satisfies $\trfull{\varphi}$.
%${\runconvert(t,\omega)}^{\ufiai{\I}}=\tonat{{\instpar{t}{\omega}{\J}}^{\bvi{\I}}}$
%for any parametric $\sigfsbv$-term $t$
%and interpretation $\J$ that maps variables in $\fv{\omega}$ to positive integers.
The converse translation can be achieved similarly, where appropriate bit-widths are determined by the range axioms
$0 \leq \maptointvar(x) < \twotothe{\omegabw(x)}$ that occur in $\trfull{\varphi,\omega}$.
The rest of the properties follow from Property 1, by showing that the axioms in \Cref{partialax}
are valid in every interpretation of $\tufia$ that satisfies 
$\aaxiom{\full}(\varphi,\omega)$.

% ajr: no room, mentioned techniques 
%\subsection{Solving \logicufnia Constraints}

\section{Case Studies}
\label{sec:casestudy}
We apply the techniques from \Cref{sec:encodings}
to three case studies:
$(i)$ verification of invertibility conditions from Niemetz et al.~\cite{invcav18};
$(ii)$ verification of compiler optimizations as generated by~\alive \cite{alive15};
and $(iii)$ verification of rewrite rules that are used in SMT solvers.
%In each one of these case studies, our starting point was 
%verification benchmarks that were written for the theory of fixed-size bit-vectors,
%thus representing the correctness of the verified property up to a certain bit-width.
%We have translated these benchmark into the theory of integer arithmetics directly, 
%relying on translation described in \Cref{fig:translation} as an intermediate step.
For these case studies,
we consider a set of verification conditions that
originally use fixed-size bit-vectors,
%and were conjectured to be valid for any bit-width.
and exclude formulas involving multiple bit-widths.
%To keep things simple, we selected only benchmarks in which
%all variables have the same bit-width.

For each formula $\phi$, we first extract a parametric version $\varphi$
by replacing each
variable in $\phi$ by a fresh $\parvar \in \parvars$
and each (concrete) bit-vector constant by a fresh $\parconst \in \parconsts$.
We define $\omegabw(\parvar) = \omegabw(\parconst) = k$
for a fresh integer variable $k$,
and let $\omegaval(\parconst)$ be the
integer value corresponding to the bit-vector constant it
replaced.
Notice that, although omitted from the presentation,
our translation can be easily extended to handle quantified bit-vector formulas,
which appear in some of the case studies.
We then define $\omega = ( \omegabw, \omegaval )$
and invoke our translation from Section~\ref{sec:encodings}
on the parametric bit-vector formula $\varphi$.
If the resulting formula is valid, the original verification
condition holds independent of the original bit-width.
In each case study,
we report on the success rates of determining the validity of these formulas
for axiomatization modes \full, \partial, \combined, and \qf.
Overall, axiomatization mode \combined yields the best results.
%we define $( \omegabw, \omegaval )$ such
%that $\omegabw$ maps each variable and constant to a fresh variable $w$
%and all constants to their corresponding 
%

All experiments described below require tools with support for the SMT logic
\logicufnia.
We used all three participants in the \logicufnia division of the 2018
SMT competition: 
\cvcfour~\cite{CVC4} (GitHub master 6eb492f6),
\zthree~\cite{Z3} (version 4.8.4), and
Vampire~\cite{DBLP:conf/cav/KovacsV13} (GitHub master d0ea236).
\zthree and \cvcfour
use various strategies and techniques for quantifier instantiation
including E-matching~\cite{DBLP:conf/cade/MouraB07},
and enumerative~\cite{DBLP:conf/tacas/ReynoldsBF18} and 
conflict-based~\cite{DBLP:conf/fmcad/ReynoldsTM14} instantiation.
For non-linear integer arithmetic, \cvcfour
uses an approach
based on incremental linearization~\cite{DBLP:journals/tocl/CimattiGIRS18,DBLP:conf/sat/CimattiGIRS18,DBLP:conf/frocos/ReynoldsTJB17}.
Vampire is a superposition-based theorem prover for first-order logic 
based on the AVATAR framework~\cite{DBLP:conf/cav/Voronkov14},
which has been extended also to support some theories
%several axiomatizations for various SMT theories including
%arithmetic
including integer arithmetic~\cite{DBLP:conf/tacas/Reger0V18}.
%
%For all experiments, we used all four axiomatization modes
%\full, \partial, \combined and \qf. 
%For axiomatization \full,
%we used two different encodings, one using quantifiers, and one using a 
%recursive function definition.
%The latter is not supported by \vampire and was outperformed by the rest.
%
%
We performed all experiments on a cluster with Intel Xeon E5-2637 CPUs
with 3.5GHz and 32GB of memory and
used a time limit of 300 seconds (wallclock) and a memory limit
of 4GB for each solver/benchmark pair.
We consider a bit-width independent property to be proved if at least one solver
proved it for at least one of the axiomatization modes.%
\footnote{%
All benchmarks, results, log files, and solver configurations are available at
\url{http://cvc4.cs.stanford.edu/papers/CADE2019-BVPROOF/}.
}

\subsection{Verifying Invertibility Conditions}
%\subsubsection{Background}

  Niemetz et al.~\cite{invcav18} present a technique for solving quantified bit-vector formulas
  that utilizes \emph{invertibility conditions} to generate symbolic
  instantiations.
  Intuitively,
  an invertibility condition~$\ic$ for a literal $\ell[x]$ is
  the exact condition under which $\ell[x]$ has a solution for $x$, i.e.,
  $\ic \lequiv \exists x.\ell[x]$.
  For example, consider bit-vector literal $\bvand{x}{s}\teq t$
  with $x \not\in \fv{s} \cup \fv{t}$;
  then, the invertibility condition for $x$ is $\bvand{t}{s}\teq{t}$.

  The authors define invertibility conditions for 
  a representative set of literals having a single occurrence of $x$,
  that involve the bit-vector operators listed in \Cref{tab:bvops}, excluding extraction, 
  as the invertibility condition for the latter is trivially $\true$.
  A considerable number of these conditions were determined by leveraging
  syntax-guided synthesis (\sygus)
   techniques~\cite{AlurBJMRSSSTU13}.
  The authors further verified the correctness of all conditions for
  bit-widths 1 to 65.
%  by checking for each bit-width the
%  \tfsbv-unsatisfiability of the formula $\neg (\ic \lequiv \exists x.\ell[x])$.
  However, a bit-width-independent formal proof of correctness of these
  conditions was left to future work.
  In the following, we apply the techniques of \Cref{sec:encodings}
  to tackle this problem.
  Note that for this case study, we exclude operators involving multiple
  bit-widths, namely bit-vector extraction and concatenation.
  For the former, all invertibility conditions are \true,
  and for the latter a hand-written proof of the correctness of its
  invertibility conditions can be achieved easily.

  %In this case study we have excluded the invertibility condition for
  %$\bvconcatf$, as, being the only operator that involves multiple bit-widths
  %for which an invertibility condition was defined, makes its encoding more
  %complex, while a hand-written proof of the correctness of its invertibility
  %conditions can be achieved easily.
\bigskip

\noindent\textbf
{Proving Invertibility Conditions} \ 
Let $\ell[x]$ be a bit-vector literal of the form
$\diamond x \rel t$ or $x \diamond s \rel t$ (dually, $s \diamond x \rel t$)
with operators $\diamond$ and relations \rel as defined in \Cref{tab:bvops}.
To prove the correctness of an invertibility condition \ic for $x$
independent of the bit-width,
we have to prove the validity of the formula:
%
%
%Given a literal $\ell$ over the variables $\set{x,s,t}$ (where $s$ and $t$ may
%coincide when the bit-vector operator in $\ell$ is unary) and an invertibility
%condition $\ic[s,t]$ for it, what needs to be proven is:
\begin{equation}
\ic\Leftrightarrow\exists x . \ell[x]
\label{eq:ic}
\end{equation}
where occurrences of $s$ and $t$ are implicitly universally quantified.
%In terms of \Cref{parabitvecform}, we assume that
%$x,s,t\in\parvars$ and $k\in\varss{\Inte}$ and
%$\omegabw(x)=\omegabw(s)=\omegabw(t)=k$.
We then want to prove that \Cref{eq:ic} is $\tfsbv$-valid under $\omega$.
Considering the two directions of~\eqref{eq:ic} separately, we get:
\begin{equation}
\tag{rtl}
\exists x . \ell[x,s,t]\Rightarrow \ic[s,t]
\label{eq:icrtl}
\end{equation}
\begin{equation}
\tag{ltr}
 \ic[s,t]\Rightarrow\exists x . \ell[x,s,t]
\label{eq:icltr}
\end{equation}

\noindent
The validity of~\eqref{eq:icrtl} is equivalent to the unsatisfiability of
the quantifier-free formula:
\begin{equation}
\tag{rtl'}
\ell[x,s,t]\w \neg \ic[s,t]
\label{eq:icrtltag}
\end{equation}

\noindent
Eliminating the quantifier in \eqref{eq:icltr} is much trickier.
%Proving the validity of~\eqref{eq:icltr}
It typically amounts to finding a
symbolic value for~$x$ such that $\ell[x,s,t]$ holds provided that $\ic[s,t]$
holds.
We refer to such a symbolic value as a \emph{conditional inverse}.
%Essentially, relying on modern techniques of SMT-solving, proving the validity
%of \eqref{eq:icltr} typically amounts to synthesizing a symbolic value for $x$
%for which $\ell[x,s,t]$ holds, provided that $\ic[s,t]$ holds.
%In the following, we show how this % quantifier can nevertheless be eliminated,
%can be achieved
%for a majority of the invertibility conditions.
\bigskip

\noindent\textbf{Conditional Inverses} \ 
%Borrowing from \cite{invcav18} the idea of synthesizing useful formulas 
%``offline" and then using them while solving arbitrary formulas, 
%we separately synthesized solutions for $x$ in every atomic formula $\ell[x,s,t]$, and then used these solutions when proving the ICs.
%We call these solutions conditional inverses.
Given an invertibility condition $\ic$ for $x$ in bit-vector literal $\ell[x]$,
we say that a term $\ci$ is a \emph{conditional inverse} for $x$
if $\ic\Ra\ell[\ci]$ is $\tfsbv$-valid.
For example, the term $s$ itself is a conditional inverse for $x$ in
the literal $\bvule{(\bvor{x}{s})}{t}$:
given that there exists some $x$ such that 
$\bvule{(\bvor{x}{s})}{t}$, we have that
$(\bvor{s}{s})\bvulef
%(\bvor{x}{s})\bvulef 
t$.
When a conditional inverse $\ci$ for $x$ is found, we may replace \eqref{eq:icltr} by:
%Given an invertibility condition $\ic[s,t]$ for $x$ in $\ell[x,s,t]$,
%we say that a term $\alpha[s,t]$ is a \emph{conditional inverse} for $x$ in $\ell[x,s,t]$ if $\ic[s,t]\Ra\ell[\alpha[s,t],s,t]$ is $\tfsbv$-valid.
%When a conditional inverse for $x$ in $\ell[x,s,t]$ is found, we may replace \eqref{eq:icltr} by:
\begin{equation}
\tag{ltr'}
 \ic\Rightarrow \ell[\ci]
 %\ic[s,t]\Rightarrow \ell[\alpha[s,t],s,t]
\label{eq:icltrtag}
\end{equation}

\noindent
 Clearly, \eqref{eq:icltrtag} implies \eqref{eq:icltr}.
 %and proving the former, being quantifier-free, is expected to be easier than
 % proving the latter.  Note,
 However, the converse may not hold, i.e., if \eqref{eq:icltrtag} is refuted,
 \eqref{eq:icltr} is not necessarily refuted.
 Notice that if the invertibility condition for $x$ is \true,
 the conditional inverse is in fact unconditional.
 The problem of finding a conditional inverse for a bit-vector literal
 $x \diamond s \rel t$ (dually, $s \diamond x \rel t$) can be defined as
 a \sygus problem by asking whether there exists a binary bit-vector function
 $C$ such that the (second-order) formula
 $\exists C\forall s\forall t. \ic \Ra C(s,t) \diamond s \rel t $
 is satisfiable.
% The problem of finding a conditional inverse for $\ell[x,s,t]$ 
%   can be defined as 
%a \sygus problem
%  by searching for a
% binary function $C$ such that the (second-order) formula
%  $\exists C\forall s\forall t.
%   \ic(s,t)\Ra \ell[C(s,t),s,t] $
%  is satisfiable.
%
  If such a function $C$ is found, then it is in fact a conditional inverse for
  $x$ in $\ell[x]$.
  %
  %Even though our actual solving is done over $\ia$, 
  %in order to simplify the \sygus problem,
  %we synthesized the conditional inverses over bit-vectors of bit-width $4$.
  We synthesized conditional inverses for $x$ in $\ell[x]$ for bit-width 4
with variants of the grammars used in~\cite{invcav18}
  to synthesize invertibility conditions.
%  %where $\bvsubf$ is definable using $\bvaddf$ and $\bvnegf$.
%  \begin{eqnarray*}
%    O_r & = &
%           \{0, \bvminsf, \bvmaxsf, s, t,
%              \bvnotf, \bvnegf, \bvandf, \bvorf \}
%    \\
%    O_g & = &
%           \{
%              0, \bvminsf, \bvmaxsf, s, t,
%              \bvnotf, \bvnegf, \bvaddf, \bvandf, \bvorf, \bvlshrf, \bvshlf \} \\
%    O_a & = &
%           \{0, \bvminsf, \bvmaxsf, s, t,
%              \bvnotf, \bvnegf, \bvaddf, \bvmulf \}
%  \end{eqnarray*}
%  %
  For each grammar
  we generated 160 \sygus problems,
  one for each combination of bit-vector operator and relation from
  \Cref{tab:bvops}
  (excluding extraction and concatenation),
  counting commutative cases only once.
  We used the \sygus feature of the SMT solver
  \cvcfour~\cite{ReynoldsDKTB15} to solve these problems,
  and out of 160, we were able to synthesize candidate conditional inverses for 143
  invertibility conditions.
  %(116 with $O_r$, 142 with $O_g$ and 123 with $O_a$).
  %(0, 15, 1 were unique solutions for grammars $O_r$, $O_g$, $O_a$).
  %
  For 12 out of these 143, we found that the synthesized
  terms were not conditional inverses for every bit-width, by checking \eqref{eq:icltrtag} for
  bit-widths up to 64.

%  We then verified that the synthesized terms are indeed conditional inverses for bit-vectors up to bit-width $64$, and only kept the ones whose verification was successful.\footnote{The last step was done simply in order to prevent us from 
%  running verification queries with wrong conditional inverses. 
%  However, there is no soundness concern here. 
%  If we prove \eqref{eq:icltrtag} for arbitrary bit-width, then in particular this proves that $\ci$ is indeed a conditional inverse.}
%Out of these 160 problems, we were able to synthesize and verify up to 64 bit-width 134 conditional inverses. 
%
%The three syntaxes produces similar results, and at many cases the same conditional inverse was found by all of them.
%The most useful syntax turned out to be $O_g$, with which we synthesized 131 conditional inverses.
%
\bigskip

\noindent\textbf{Results} \ 
%  \Cref{tab:icresults} summarizes the results of verifying the invertibility
%  conditions for each considered bit-vector literal.
  %
  \Cref{tab:icresultsverification} provides detailed information on the
  results for the axiomatization modes \full, \partial, and \qf
  discussed in \Cref{sec:encodings}.
  We use \ltr and~\rtl to indicate that only direction left-to-right
  (\ref{eq:icltr} or~\ref{eq:icltrtag}) or right-to-left (\ref{eq:icrtltag}), respectively,
  were proved,
  and \both and~\none to indicate that both or~none, respectively, of the directions
  were proved.
  Additionally, we use \ltrci (resp.~\ltrnoci) to indicate that for
  direction left-to-right, formula~\eqref{eq:icltrtag}
  (resp.~\eqref{eq:icltr}) was proved with (resp.~without) plugging in a
  conditional inverse.

  Overall, out of 160 invertibility conditions, we were able to fully prove~110,
  and for 19 (17) conditions we were able to prove only direction
  \ref{eq:icrtltag} (\ref{eq:icltrtag}).
  For direction right-to-left,
  129 formulas~\eqref{eq:icrtltag} overall were successfully proved to be unsatisfiable.
  Out of these 129,
  32 formulas were actually trivial since the invertibility condition \ic was~\true.
  For direction left-to-right,
  overall, 127 formulas were proved successfully,
  and out of these, 102 (94) were proved using
  (resp.~not using) a conditional inverse.
  Furthermore, 33 formulas could only be proved when using a conditional inverse.
  %how did i compute this? save `over_encs` in `gen_encoding_cond_tables` to csv, grep, cut, sort, uniq.
  %
  Thus, using conditional inverses was helpful for proving
  the correctness of invertibility conditions.

  Considering the different axiomatization modes,
  overall, 
  with 104 fully proved and only 17 unproved instances,
  \combined performed best.
  Interestingly,
even though axiomatization \qf only includes 
  some of the
  base cases of axiomatization \partial,
it still performs well.
  This may be due to the fact that in many cases,
  the correctness of the invertibility condition does not rely on any
  particular property of the operators involved.
  For example, the invertibility condition \ic for literal
  $\equal{\bvand{x}{s}}{t}$ is  $\equal{\bvand{t}{s}}{t}$.
  Proving the correctness of \ic amounts to coming up with the right
  substitution for $x$, without relying on any particular axiomatization of
  $\intandf$.
  In contrast, the invertibility condition \ic for literal $\dist{\bvand{x}{s}}{t}$ is
  $\dist{t}{0}\v\dist{s}{0}$.
%  A useful candidate for $x$ is the conditional inverse $\bvnot{t}$. 
  %However, unlike the previous example, 
  Proving the correctness of \ic 
%  proving that $\dist{\bvand{(\bvnot{t})}{s}}{t}$ holds whenever \ic holds,
  relies on axioms regarding $\bvandf$ and $\bvnotf$.
  Specifically, we have found that from \partial, it suffices to keep ``min" and ``idempotence" to prove \ic.
  Overall,
  from the 2696 problems that this case study included,
  \cvcfour proved $50.3$\%,
  \vampire proved $31.4$\%,
  and
  \zthree proved $33.8$\%,
  while $23.5$\% of the problems were proved by all solvers.

%As another example, consider the left-to-right direction of the first case of $\bvsle$ with $\bvlshr$.
%The IC is: $t\bvsge t \bvlshr s$, and the proof obligation is $\exists x. x\bvlshr s \bvsle t$. Now, $\bvsge$ is encoded to integer $\geq$ and $\bvsle$ is encoded to integer $\leq$. So, this is proven simply by properties of integer $\geq$ and $\leq$.

\begin{table}[t]
  \centering
  % !TEX root =../main.tex
{%
  \renewcommand{\arraystretch}{1.1}%
  \begin{tabular}{r@{\hspace{3.0em}}r@{\hspace{2.0em}}r@{\hspace{2.0em}}r@{\hspace{2.0em}}r@{\hspace{3.0em}}r@{\hspace{2.0em}}r@{\hspace{2.0em}}}
  \hline
    \textbf{Axiomatization} & \both & \rtl & \ltr & \none & \ltrci & \ltrnoci \\
    \hline
    \full & 64 & 18 & 22 & 56 & 72 & 51\\
    \partial & 76 & 14 & 26 & 44 & 78 & 81\\
    \qf & 40 & 22 & 22 & 76 & 50 & 51\\
    \combined & 104 & 21 & 18 & 17 & 99 & 79\\
    \hline
    Total (160) & 110 & 19 & 17 & 14 & 102 & 94\\
  \end{tabular}%
}

  \vspace{1ex}
  \caption{
  Invertibility condition verification using axiomatization modes
  %\combined,
  \combined, \full, \partial, and \qf.
  Column \ltrci (\ltrnoci) counts left-to-right proved with (without)
  conditional inverse.
  }
  \label{tab:icresultsverification}
  \vspace*{-4ex}
\end{table}

\subsection{Verifying Alive Optimizations}
  Lopes et al.~\cite{alive15} introduces \alive,
  a tool for proving the correctness of compiler peephole optimizations.
  \alive has a high-level language for specifying optimizations.  The tool
  takes as input a description of an optimization in this high-level language
  and then automatically verifies that applying the optimization to an
  arbitrary piece of source code produces optimized target code that is
  equivalent under a given precondition.  It can also automatically translate verified optimizations into \cpp code that can be linked 
  into \llvm~\cite{DBLP:conf/cgo/LattnerA04}.
%
%The paper \cite{alive15} introduces \alive, a tool for defining compiler
%optimizations, automatically verify that the source (unoptimized) and target
%(optimized) codes of each  optimization are equivalent, and translate them into
%$\cpp$ code, incorporable as an \llvm optimization (see, e.g.,
%\cite{LattnerAdve:tutorial}).
For each optimization, \alive generates four constraints that encode the
following properties, assuming that the precondition of the optimization holds:
  \begin{enumerate}
    \item \label{enum:alive_mem} \emph{Memory}\hskip .5em Source and Target
      yield the same state of memory after execution.
    \item \label{enum:alive_def} \emph{Definedness}\hskip .5em The target is
      well-defined
      %(e.g., no division by zero)
      whenever the source is.
    \item \label{enum:alive_poi} \emph{Poison}\hskip .5em The target produces
      so-called poison values (caused by \llvm's $\nsw$, $\nuw$, and $\exact$
      attributes) only when the source does.
    \item \label{enum:alive_eq} \emph{Equivalence} Source and target yield the
      same result after execution.
  \end{enumerate}

\noindent
From these verification tasks, \alive can generate benchmarks in \smtlib
format in the theory of fixed-size bit-vectors, with and without quantifiers.
For each task, types are instantiated with all possible valid type assignments
(for integer types up to a default bound of 64 bits).
%Memory-handling instructions are encoded in the theory of arrays.
%
In the following, we apply our techniques from \Cref{sec:encodings}
to prove \alive verification tasks independently from the bit-width.
For this, as in the Alive paper,
we consider the set of optimizations from the $\instcombine$
optimization pass of \llvm, provided as \alive translations
(433 total).\footnote{At \scriptsize\url{https://github.com/nunoplopes/alive/tree/master/tests/instcombine}}
Of these 433 optimizations, 113 are dependent on a specific bit-width; thus we focus on the remaining 320.
%how did I get this number? 
%uncomment the printing line inside `verify_sixty_three_reason` in the script 
%`my_scripts/verify_stuff.py`.
% run `python3 my_scripts/verify_stuff.py bv_to_bool_bench ~/filter2.tmp > ~/tmp.tmp`
% now ~/tmp.tmp includes all optimizations with less than 63 files. count how many are there.
%
%And how did i get the list of optimizations? running the same command. The optimizations that we use are in ~/filter2.tmp in the example above.
%
We further exclude optimizations that do not comply with the following
criteria:
\begin{itemize}
  \item In each generated \smtlib file, only a single bit-width is used. 
  \item All \smtlib files generated for a property (instantiated for all
    possible valid type assignments) must be identical modulo the bit-width (excluding, e.g.,  bit-width dependent constants other than $0$, $1$, (un)signed min/max, and the bit-width).
\end{itemize}
As a useful exception to the first criterion,
we included instances where all terms of bit-width 1 can be interpreted as
Boolean terms.
Overall, we consider bit-width independent verification conditions
\ref{enum:alive_mem}--\ref{enum:alive_eq}
for 180 out of 320 optimizations.
None of these
include memory operations or poison values,
and only some have definedness constraints (and those are simple).
Hence, the generated verification conditions
\ref{enum:alive_mem}--\ref{enum:alive_poi}
are trivial.
%In the following, 
We thus only consider the equivalence verification
conditions for these 180 optimizations.
\bigskip

\noindent\textbf{Results} \  
  \Cref{tab:aliveresults} summarizes the results of verifying 
  the equivalence constraints for the selected 180 optimizations
  from the \instcombine LLVM optimization pass.
  It first lists all families, showing the number of bit-width independent
  optimizations per family (320 total).  The next column indicates how many in
  each family were in the set of 180 considered
  optimizations, and the remaining columns show how many of those considered
  were proved with each axiomatization mode.

\begin{table}[t]
  \centering
  % !TEX root =../main.tex
%based on https://github.com/yoni206/alive/blob/attempts/tex_csv/alive_enc_cmp.csv
{%
  \renewcommand{\arraystretch}{1.1}%
  \begin{tabular}{r@{\hspace{2.0em}}r@{\hspace{3.5em}}r@{\hspace{1.5em}}r@{\hspace{1.5em}}r@{\hspace{1.5em}}r@{\hspace{1.5em}}r@{\hspace{1.5em}}r@{\hspace{2.5em}}}
  \hline
  \textbf{Family} & \textbf{Considered} & \multicolumn{4}{c}{\textbf{Proved}}\\
  && \full & \partial & \qf & \combined & Total \\
  \hline              
  AddSub (52) & 16 &	 	7 & 7 & 7 & 9 & 9 \\
  MulDivRem (29) & 5 &	 	1 & 2 & 1 & 3 & 3 \\
  AndOrXor (162) & 124 &	 	57 & 55 & 53 & 60 & 60 \\
  Select (51) & 26 & 		 	15 & 11 & 11 & 16 & 16 \\
  Shifts (17) & 9 & 		 	0 & 0 & 	0 & 0 & 0 \\
  LoadStoreAlloca (9) & 0 & 	0 & 0 & 0 & 0 & 0 \\
  \hline
  Total (320) & 180 & 	 	80 & 75 & 72 & 88 & 88 \\
  \end{tabular}%
}

  \vspace{1ex}
  \caption{
  \alive optimizations verification using axiomatizations
  \combined, \full, \partial and \qf.
  }
  \label{tab:aliveresults}
  \vspace*{-4ex}
\end{table}

\iffalse
\begin{table}\centering
\centering
\begin{mdframed}
\centering
\pgfplotstabletypeset[col sep=comma,string type,
columns/family/.style={string type},
every head row/.style={before row={\toprule \rowcolor[gray]{0.8}}, },
%every last row/.style={after row=\bottomrule},
 every first column/.style={
  column type/.add={>{\columncolor[gray]{.8}}}{}
  },
  columns/alive/.style={column type/.add={|}{}},
    columns/translated/.style={column type/.add={|}{}},
        columns/proved/.style={column type/.add={|}{}},
]{tex_csv/alive_summary.csv}

\end{mdframed}
\caption{
For each family, we list the original number of alive optimizations in that family, the number of optimizations whose verification conditions were translated, and the number of optimizations whose verification conditions were proved.
}
\label{alive_sum}
\end{table}

\begin{table}\centering
\centering
\begin{mdframed}
\centering
\pgfplotstabletypeset[col sep=comma,string type,
columns/encoding/.style={string type},
every head row/.style={before row={\toprule \rowcolor[gray]{0.8}}, },
%every last row/.style={after row=\bottomrule},
columns/encoding/.style={column name=mode},
 every first column/.style={
  column type/.add={>{\columncolor[gray]{.8}}}{}
  },
]{tex_csv/alive_enc_cmp.csv}

\end{mdframed}
\caption{
For each mode, we list how many optimization did it prove in each family, and how many optimizations did it prove overall.
}
\label{alive_enc}
\end{table}
\fi

  Overall, out of 180 equivalence verification conditions,
  we were able to prove 88.
  Our techniques were most successful for the AndOrXor family.
  This is not too surprising, since
  many verification conditions of this family
  require only Boolean reasoning and basic properties of ordering relations
  that are already included in the theory \tia.
  For example,
  given bit-vector term $a$ and bit-vector constants $C_1$ and $C_2$,
  optimization AndOrXor:979 essentially rewrites
  $(\bvslt{a}{C_1}\w\bvslt{a}{C_2})$ to $\bvslt{a}{C_1}$, 
  provided that 
  precondition
  $\bvslt{C_{1}}{C_{2}}$ holds.
  To prove its correctness, it suffices to apply the transitivity of
  $\bvsltf$ with Boolean reasoning.
  The same holds when lifting this equivalence to the integers, deducing the
  transitivity of $\intsltf$ from that of the builtin $<$ relation of~$\tia$.

  None of the 9 benchmarks from the Shifts family were proven.
  These benchmarks are more complicated than others.
  They combine bit-wise and arithmetical operations and thus rely on their
  axiomatization.
  Solving these benchmarks is an interesting challenge for future
  work.  Adding specialized axioms to \partial is one promising approach.

  Interestingly, for this case study,
  the results from the different axiomatization modes are very similar.
  This can again be explained by the fact that many optimizations rely
  on properties of the integers that are already included in \tia,
  without requiring any particular property of functions
  $\twotothef$, $\intandf{}$, $\intorf{}$ and $\intxorf$ (as in the above example).

  Note that we have also tried using our approach for proving the equivalence verification conditions 
  for up to a bit-width of 64. 
  However, all optimizations that were proven correct this way were already 
  proven correct for arbitrary bit-widths, 
  which suggests 
  that this restriction did not make the benchmarks easier.
  Overall,
  from the 720 problems in this case study,
  \cvcfour proved $42.6$\%,
  \vampire proved $36.2$\%,
  and
  \zthree proved $37.9$\%,
  while $32.5$\% of the problems were proved by all solvers.

\subsection{BV Rewriting}

SMT solvers for the theory of fixed-size bit-vectors heavily rely on rewriting
to reduce the size of the input formula prior to solving the problem.
Since these rewrite rules are usually implemented independently of the
bit-width, verifying that they hold for any bit-width is crucial for the
soundness of the solver.
For this case study,
we used a feature of the SyGuS solver in CVC4 that allows us to enumerate
equivalent bit-vector terms/formulas (rewrite candidates) for a certain
bit-width up to a certain term depth (nesting level of operators)~\cite{noetzli2019}.
We generated 1575 pairs of equivalent bit-vector terms of depth
three and 431
equivalent pairs of formulas of depth two for bit-width 4 and translated them
to integer problems with axiomatization modes \full, \partial,
\qf, and
\combined, resulting in $6300+1724 = 8024$ benchmarks in total.
Since rewrites that have been proved can be used to further axiomatize the
integer translation, we collected all proven rewrites after each run, added
them as axioms to the initial problems and reran the experiments.
This was repeated until we reached a fixpoint, i.e., no further rewrites
were proved.
%
% 1. 409/435 and 878/1575
% 2. 409/435 and 935/1575 (final)
%
With this approach, we were able to prove 409 out of the 435 formula
equivalences (94\%), reaching a fixpoint at the first iteration.
For the equivalent terms, we initially proved 878 out of the 1575 equivalences,
which increased to 935 (59\%) after adding all axioms from the first run,
reaching a fixpoint after two iterations.
Overall,
from the 8024 problems,
\cvcfour proved $64.2$\%,
\vampire proved $66.5$\%,
and
\zthree proved $64.2$\%,
while $63.8$\% of the problems were proved by all solvers.

%
% terms3:
%
% VBS: 3730
% COMMON: 3502
%
% DIRECTORY |       terms3-z3 |  terms3-vampire |     terms3-cvc4 |
% BENCHMARK |  STAT SLVD  TOT |  STAT SLVD  TOT |  STAT SLVD  TOT |
%  combined |    ok  878  935 |    ee  931  935 |    ok  878  878 |
%      full |    ok  878  930 |    ok  930  930 |    ok  878  878 |
%   partial |    ok  878  934 |    ee  933  934 |    ok  878  878 |
%        qf |    ok  878  931 |    ee  926  931 |    ok  878  878 |
%  __totals |    ok 3512 3730 |    ee 3720 3730 |    ok 3512 3512 |
%
% formulas2:
%
% VBS: 1636
% COMMON: 1617
%
% DIRECTORY |    formulas2-z3 | formulas2-vampire |  formulas2-cvc4 |
% BENCHMARK |  STAT SLVD  TOT |    STAT SLVD  TOT |  STAT SLVD  TOT |
%  combined |    ok  409  409 |      ee  408  409 |    ok  409  409 |
%      full |    ok  409  409 |      ee  401  409 |    ok  409  409 |
%   partial |    ok  409  409 |      ee  403  409 |    ok  409  409 |
%        qf |    ok  409  409 |      ee  405  409 |    ok  409  409 |
%  __totals |    ok 1636 1636 |      ee 1617 1636 |    ok 1636 1636 |

\section{Conclusion and Further Research}
\label{sec:conc}
We have studied several translations from bit-vector formulas with parametric bit-width to the theories of integer arithmetic and uninterpreted functions.
The translations differ in the way that the operator $2^{(\_)}$ and bitwise logical operators are 
axiomatized, namely, fully (using induction) or partially (using some of their key properties).
Our empirical results show that state-of-the-art SMT solvers %for integer arithmetic
are capable of solving the 
%results of this translation 
translated formulas
for various benchmarks that originate
from the verification of invertibility conditions, \llvm optimizations,
%written in \alive, 
and rewriting rules for fixed-size bit-vectors.

In future research,
we plan to investigate a translation of our results to
a proof assistant such as \coq,
for which a bit-vector library was recently developed~\cite{ekici2017smtcoq}.
This will involve
supporting proofs in the SMT solver
for non-linear arithmetic and quantifiers.
We believe that our promising experimental results with an integer encoding
indicate that this is a viable approach for automating bit-width independent proofs.
We also plan to explore satisfiable benchmarks, and to extend our approach for translating models.

%
% ---- Bibliography ----
%
% BibTeX users should specify bibliography style 'splncs04'.
% References will then be sorted and formatted in the correct style.
%
 \bibliographystyle{splncs04}
 \bibliography{mybib}
\newpage
\begin{appendix}

\section{Verified Invertibility Conditions}
\Cref{tab:icresults} summaries the results of verifying invertibility conditions. For each invertibility condition, 
it states whether it was fully proved, only one direction of it was proved, or none of the directions were proved.

\begin{table}
  \centering
  % !TEX root =../main.tex
{%
  \renewcommand{\arraystretch}{1.1}%
  \begin{tabular}{r@{\hspace{2.0em}}c@{\hspace{1.0em}}c@{\hspace{1.5em}}c@{\hspace{1.0em}}c@{\hspace{1.0em}}c@{\hspace{1.0em}}c@{\hspace{1.5em}}c@{\hspace{1.0em}}c@{\hspace{1.0em}}c@{\hspace{1.0em}}c}
    \hline
    \\[-2.5ex]
    $\ell[x]$ & \teq & \tneq & \bvultf & \bvugtf & \bvulef & \bvugef & \bvsltf & \bvsgtf & \bvslef & \bvsgef
    \\[.5ex]
    \hline
    \\[-2.5ex]
    $\bvneg{x}     \rel t$ & \both & \both & \both & \both & \both & \both & \both & \both & \both & \both\\
    $\bvnot{x}     \rel t$ & \both & \both & \both & \both & \both & \both & \both & \both & \both & \both\\
    $\bvand{x}{s}  \rel t$ & \ltr & \both & \both & \both & \both & \both & \ltr & \ltr & \none & \ltr\\
    $\bvor{x}{s}   \rel t$ & \ltr & \both & \both & \both & \both & \both & \ltr & \none & \ltr & \none\\
    $\bvshl{x}{s}  \rel t$ & \ltr & \rtl & \both & \ltr & \both & \ltr & \ltr & \none & \rtl & \none\\
    $\bvshl{s}{x}  \rel t$ & \both & \both & \both & \both & \both & \both & \rtl & \both & \rtl & \both\\
    $\bvlshr{x}{s} \rel t$ & \both & \both & \both & \ltr & \both & \both & \both & \ltr & \both & \ltr\\
    $\bvlshr{s}{x} \rel t$ & \both & \both & \both & \both & \both & \both & \both & \both & \both & \both\\
    $\bvashr{x}{s} \rel t$ & \none & \both & \both & \both & \both & \both & \ltr & \both & \ltr & \both\\
    $\bvashr{s}{x} \rel t$ & \both & \both & \rtl & \rtl & \rtl & \rtl & \rtl & \none & \rtl & \both\\
    $\bvadd{x}{s}  \rel t$ & \both & \both & \both & \both & \both & \both & \both & \both & \both & \both\\
    $\bvmul{x}{s}  \rel t$ & \none & \rtl & \both & \none & \both & \none & \none & \none & \rtl & \none\\
    $\bvudiv{x}{s} \rel t$ & \both & \both & \both & \both & \both & \rtl & \both & \both & \both & \both\\
    $\bvudiv{s}{x} \rel t$ & \both & \rtl & \both & \both & \both & \both & \both & \rtl & \both & \rtl\\
    $\bvurem{x}{s} \rel t$ & \both & \both & \both & \both & \both & \both & \none & \both & \rtl & \both\\
    $\bvurem{s}{x} \rel t$ & \ltr & \both & \both & \both & \both & \both & \both & \rtl & \both & \rtl\\
  \end{tabular}%
}

  \vspace{2ex}
  \caption{
  Invertibility conditions verification.
  \both means was fully proved,
  \protect\ltr  means only left-to-right proved,
  \protect\rtl means only right-to-left proved,
  \none means not proved.
  }
  \label{tab:icresults}
\end{table}

\newpage
\section{Conditional Inverses}

\Cref{tab:iv_eq,tab:iv_bvsl,tab:iv_bvsg,tab:iv_bvul,tab:iv_bvug}
list all verified conditional inverses that we found.
Note that we omitted superscript $\fsbvtheory$ from all bit-vector symbols for
better readability.

\begin{table}
  \centering
  {\renewcommand{\arraystretch}{1.1}%
\begin{tabular}{r@{\hspace{1em}}r@{\hspace{1em}}r}
\hline
\textbf{Literal} & \teqf{}{} & \tneqf{}{} \\
\hline
\ensuremath{\bvnneg{x}\rel t} & \ensuremath{\bvnneg{t}} & \ensuremath{\bvnnot{t}}\\
\ensuremath{\bvnnot{x}\rel t} & \ensuremath{\bvnnot{t}} & \ensuremath{t}\\
\ensuremath{\bvnadd{x}{s}\rel t} & \ensuremath{\bvnsub{t}{s}} & \ensuremath{\bvnnot{(\bvnadd{s}{t})}}\\
\ensuremath{\bvnand{x}{s}\rel t} & \ensuremath{t} & \ensuremath{\bvnnot{t}}\\
\ensuremath{\bvnashr{x}{s}\rel t} &  & \ensuremath{\bvnnot{t}}\\
\ensuremath{\bvnashr{s}{x}\rel t} &  & \ensuremath{\bvnlshr{t}{\bvnsub{s}{t}}}\\
\ensuremath{\bvnlshr{x}{s}\rel t} & \ensuremath{\bvnshl{t}{s}} & \ensuremath{\bvnshl{\bvnminsf}{t}}\\
\ensuremath{\bvnlshr{s}{x}\rel t} &  & \ensuremath{\bvnneg{t}}\\
\ensuremath{\bvnmul{x}{s}\rel t} &  & \ensuremath{\bvnshl{\bvnmaxsf}{t}}\\
\ensuremath{\bvnor{x}{s}\rel t} & \ensuremath{t} & \ensuremath{\bvnnot{t}}\\
\ensuremath{\bvnshl{x}{s}\rel t} & \ensuremath{\bvnlshr{t}{s}} & \ensuremath{\bvnshl{\bvnmaxsf}{t}}\\
\ensuremath{\bvnshl{s}{x}\rel t} &  & \ensuremath{t}\\
\ensuremath{\bvnudiv{x}{s}\rel t} & \ensuremath{\bvnmul{s}{t}} & \ensuremath{\bvnlshr{s}{t}}\\
\ensuremath{\bvnudiv{s}{x}\rel t} &  & \ensuremath{\bvnand{t}{\bvnminsf}}\\
\ensuremath{\bvnurem{x}{s}\rel t} & \ensuremath{t} & \ensuremath{\bvnneg{\bvnnot{t}}}\\
\ensuremath{\bvnurem{s}{x}\rel t} & \ensuremath{\bvnsub{s}{t}} & \ensuremath{t}\\
\hline
\end{tabular}}

  \vspace{2ex}
  \caption{Conditional inverses for relations \teq and \tneq.}
  \label{tab:iv_eq}
\end{table}

\begin{table}
  \centering
  {\renewcommand{\arraystretch}{1.1}%
\begin{tabular}{r@{\hspace{1em}}r@{\hspace{1em}}r}
\hline
\textbf{Literal} & \bvnslt{}{} & \bvnsle{}{} \\
\hline
\ensuremath{\bvnneg{x}\rel t} & \ensuremath{\bvnminsf} & \ensuremath{\bvnminsf}\\
\ensuremath{\bvnnot{x}\rel t} & \ensuremath{\bvnmaxsf} & \ensuremath{\bvnmaxsf}\\
\ensuremath{\bvnadd{x}{s}\rel t} & \ensuremath{\bvnsub{\bvnminsf}{s}} & \ensuremath{\bvnsub{t}{s}}\\
\ensuremath{\bvnand{x}{s}\rel t} & \ensuremath{\bvnminsf} & \ensuremath{t}\\
\ensuremath{\bvnashr{x}{s}\rel t} & \ensuremath{\bvnminsf} & \ensuremath{\bvnminsf}\\
\ensuremath{\bvnashr{s}{x}\rel t} & \ensuremath{\bvnnot{(\bvnor{s}{\bvnmaxsf})}} & \ensuremath{\bvnnot{(\bvnor{s}{\bvnmaxsf})}}\\
\ensuremath{\bvnlshr{x}{s}\rel t} & \ensuremath{\bvnshl{\bvnminsf}{s}} & \ensuremath{t}\\
\ensuremath{\bvnlshr{s}{x}\rel t} & \ensuremath{\bvnnot{(\bvnor{s}{\bvnmaxsf})}} & \ensuremath{\bvnnot{(\bvnor{s}{\bvnmaxsf})}}\\
\ensuremath{\bvnmul{x}{s}\rel t} &  & \\
\ensuremath{\bvnor{x}{s}\rel t} & \ensuremath{\bvnminsf} & \ensuremath{\bvnminsf}\\
\ensuremath{\bvnshl{x}{s}\rel t} & \ensuremath{\bvnlshr{\bvnminsf}{s}} & \ensuremath{\bvnlshr{t}{s}}\\
\ensuremath{\bvnshl{s}{x}\rel t} &  & \\
\ensuremath{\bvnudiv{x}{s}\rel t} & \ensuremath{\bvnnot{\bvnneg{t}}} & \ensuremath{t}\\
\ensuremath{\bvnudiv{s}{x}\rel t} &  & \\
\ensuremath{\bvnurem{x}{s}\rel t} & \ensuremath{\bvnnot{(\bvnor{\bvnmaxsf}{\bvnneg{s}})}} & \ensuremath{\bvnand{t}{\bvnminsf}}\\
\ensuremath{\bvnurem{s}{x}\rel t} & \ensuremath{t} & \ensuremath{\bvnsub{s}{t}}\\
\hline
\end{tabular}}

  \vspace{2ex}
  \caption{Conditional inverses for relations \bvnsltf and \bvnslef.}
  \label{tab:iv_bvsl}
\end{table}

\begin{table}
  \centering
  {\renewcommand{\arraystretch}{1.1}%
\begin{tabular}{r@{\hspace{1em}}r@{\hspace{1em}}r}
\hline
\textbf{Literal} & \bvnsgt{}{} & \bvnsge{}{} \\
\hline
\ensuremath{\bvnneg{x}\rel t} & \ensuremath{\bvnnot{t}} & \ensuremath{\bvnneg{t}}\\
\ensuremath{\bvnnot{x}\rel t} & \ensuremath{\bvnminsf} & \ensuremath{\bvnminsf}\\
\ensuremath{\bvnadd{x}{s}\rel t} & \ensuremath{\bvnsub{\bvnmaxsf}{s}} & \ensuremath{\bvnsub{t}{s}}\\
\ensuremath{\bvnand{x}{s}\rel t} & \ensuremath{\bvnmaxsf} & \ensuremath{\bvnmaxsf}\\
\ensuremath{\bvnashr{x}{s}\rel t} & \ensuremath{\bvnmaxsf} & \ensuremath{\bvnmaxsf}\\
\ensuremath{\bvnashr{s}{x}\rel t} & \ensuremath{\bvnand{s}{\bvnminsf}} & \ensuremath{\bvnand{s}{\bvnminsf}}\\
\ensuremath{\bvnlshr{x}{s}\rel t} & \ensuremath{\bvnshl{\bvnmaxsf}{s}} & \ensuremath{\bvnshl{t}{s}}\\
\ensuremath{\bvnlshr{s}{x}\rel t} &  & \\
\ensuremath{\bvnmul{x}{s}\rel t} &  & \\
\ensuremath{\bvnor{x}{s}\rel t} & \ensuremath{\bvnmaxsf} & \ensuremath{t}\\
\ensuremath{\bvnshl{x}{s}\rel t} & \ensuremath{\bvnlshr{\bvnmaxsf}{s}} & \ensuremath{\bvnlshr{\bvnmaxsf}{s}}\\
\ensuremath{\bvnshl{s}{x}\rel t} &  & \\
\ensuremath{\bvnudiv{x}{s}\rel t} &  & \\
\ensuremath{\bvnudiv{s}{x}\rel t} &  & \\
\ensuremath{\bvnurem{x}{s}\rel t} & \ensuremath{\bvnneg{\bvnnot{t}}} & \ensuremath{t}\\
\ensuremath{\bvnurem{s}{x}\rel t} & \ensuremath{\bvnsub{(\bvnor{s}{\bvnminsf})}{(\bvnand{\bvnmaxsf}{\bvnsub{t}{\bvnmaxsf}})}} & \ensuremath{\bvnsub{(\bvnor{s}{\bvnminsf})}{(\bvnand{t}{\bvnmaxsf})}}\\
\hline
\end{tabular}}

  \vspace{2ex}
  \caption{Conditional inverses for relations \bvnsgtf and \bvnsgef.}
  \label{tab:iv_bvsg}
\end{table}

\begin{table}
  \centering
  {\renewcommand{\arraystretch}{1.1}%
\begin{tabular}{r@{\hspace{1em}}r@{\hspace{1em}}r}
\hline
\textbf{Literal} & \bvnult{}{} & \bvnule{}{} \\
\hline
\ensuremath{\bvnneg{x}\rel t} & \ensuremath{0} & \ensuremath{0}\\
\ensuremath{\bvnnot{x}\rel t} & \ensuremath{\bvnneg{t}} & \ensuremath{\bvnnot{t}}\\
\ensuremath{\bvnadd{x}{s}\rel t} & \ensuremath{\bvnneg{s}} & \ensuremath{\bvnneg{s}}\\
\ensuremath{\bvnand{x}{s}\rel t} & \ensuremath{0} & \ensuremath{t}\\
\ensuremath{\bvnashr{x}{s}\rel t} & \ensuremath{0} & \ensuremath{0}\\
\ensuremath{\bvnashr{s}{x}\rel t} & \ensuremath{\bvnnot{(\bvnor{s}{\bvnmaxsf})}} & \ensuremath{\bvnnot{(\bvnor{s}{\bvnmaxsf})}}\\
\ensuremath{\bvnlshr{x}{s}\rel t} & \ensuremath{s} & \ensuremath{s}\\
\ensuremath{\bvnlshr{s}{x}\rel t} & \ensuremath{s} & \ensuremath{s}\\
\ensuremath{\bvnmul{x}{s}\rel t} & \ensuremath{0} & \ensuremath{0}\\
\ensuremath{\bvnor{x}{s}\rel t} & \ensuremath{s} & \ensuremath{s}\\
\ensuremath{\bvnshl{x}{s}\rel t} & \ensuremath{0} & \ensuremath{0}\\
\ensuremath{\bvnshl{s}{x}\rel t} & \ensuremath{\bvnminsf} & \ensuremath{\bvnminsf}\\
\ensuremath{\bvnudiv{x}{s}\rel t} & \ensuremath{0} & \ensuremath{t}\\
\ensuremath{\bvnudiv{s}{x}\rel t} & \ensuremath{\bvnnot{0}} & \ensuremath{\bvnnot{0}}\\
\ensuremath{\bvnurem{x}{s}\rel t} & \ensuremath{s} & \ensuremath{s}\\
\ensuremath{\bvnurem{s}{x}\rel t} & \ensuremath{s} & \ensuremath{s}\\
\hline
\end{tabular}}

  \vspace{2ex}
  \caption{Conditional inverses for relations \bvnultf and \bvnulef.}
  \label{tab:iv_bvul}
\end{table}

\begin{table}
  \centering
  {\renewcommand{\arraystretch}{1.1}%
\begin{tabular}{r@{\hspace{1em}}r@{\hspace{1em}}r}
\hline
\textbf{Literal} & \bvnugt{}{} & \bvnuge{}{} \\
\hline
\ensuremath{\bvnneg{x}\rel t} & \ensuremath{\bvnnot{t}} & \ensuremath{\bvnneg{t}}\\
\ensuremath{\bvnnot{x}\rel t} & \ensuremath{0} & \ensuremath{0}\\
\ensuremath{\bvnadd{x}{s}\rel t} & \ensuremath{\bvnnot{s}} & \ensuremath{\bvnnot{s}}\\
\ensuremath{\bvnand{x}{s}\rel t} & \ensuremath{s} & \ensuremath{s}\\
\ensuremath{\bvnashr{x}{s}\rel t} & \ensuremath{\bvnnot{0}} & \ensuremath{\bvnnot{0}}\\
\ensuremath{\bvnashr{s}{x}\rel t} & \ensuremath{\bvnand{s}{\bvnminsf}} & \ensuremath{\bvnand{s}{\bvnminsf}}\\
\ensuremath{\bvnlshr{x}{s}\rel t} & \ensuremath{\bvnnot{s}} & \ensuremath{\bvnnot{s}}\\
\ensuremath{\bvnlshr{s}{x}\rel t} & \ensuremath{0} & \ensuremath{0}\\
\ensuremath{\bvnmul{x}{s}\rel t} &  & \\
\ensuremath{\bvnor{x}{s}\rel t} & \ensuremath{\bvnnot{s}} & \ensuremath{t}\\
\ensuremath{\bvnshl{x}{s}\rel t} & \ensuremath{\bvnnot{0}} & \ensuremath{\bvnnot{0}}\\
\ensuremath{\bvnshl{s}{x}\rel t} &  & \\
\ensuremath{\bvnudiv{x}{s}\rel t} & \ensuremath{\bvnnot{0}} & \\
\ensuremath{\bvnudiv{s}{x}\rel t} & \ensuremath{0} & \ensuremath{0}\\
\ensuremath{\bvnurem{x}{s}\rel t} & \ensuremath{\bvnnot{\bvnneg{s}}} & \ensuremath{t}\\
\ensuremath{\bvnurem{s}{x}\rel t} & \ensuremath{0} & \ensuremath{0}\\
\hline
\end{tabular}}

  \vspace{2ex}
  \caption{Conditional inverses for relations \bvnugtf and \bvnugef.}
  \label{tab:iv_bvug}
\end{table}

\section{Proof of \Cref{tra-correct}}

\subsection{Property 1}
Given a mapping $\J$ of the variables in $\vec{v}=\fv{\omega}$ to positive integers,
every interpretation
$\bvi{\I}$ 
of $\tfsbv$
is translated to a corresponding  interpretation 
$\ufiai{\I}$ 
of $\tufia$
such that 
\begin{quote}
$(\ast)~\bvi{\I}$ satisfies $\varphi$ iff $\ufiai{\I}$ satisfies $\trfull{\varphi}$.
\end{quote}
$\ufiai{\I}$ is defined as follows:
\bi
\item $\op^{\ufiai{\I}}$
is set to satisfy $\axiom{\op}{\full}$ for any
$\op\in\set{\twotothef,\intandf,\intorf,\intxorf}$
\item $v^{\ufiai{\I}}=v^{\J}$ for every $v\in\vec{v}$
\item $\maptointvar(x)^{\ufiai{\I}}=\tonat{x_{c}^{\bvi{\I}}}$,
where $c=\omegabw(x)^{\J}$
for any $x\in \parvars$
\ei
The converse translation can be achieved similarly, where appropriate bit-widths are determined by the range axioms
$0 \leq \maptointvar(x) < \twotothe{\omegabw(x)}$ that occur in $\trfull{\varphi,\omega}$.
We prove $(\ast)$ using the following lemmas.

\begin{lemma}
\label{fullpow}
Let $\I$ be an interpretation of $\tufia$ that satisfies
$\axiom{\full}{\twotothef} $.
Then, over natural numbers, $\twotothef^{\I}$ is identical to $\lambda x.2^{x}$.
\end{lemma}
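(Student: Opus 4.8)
The plan is a straightforward induction on the natural numbers. Recall that $\axiom{\full}{\twotothef}$ is the conjunction of the two conjuncts $\equal{\twotothe{0}}{1}$ and $\forall k.\, k>0 \Ra \equal{\twotothe{k}}{2\cdot\twotothe{k-1}}$. Since $\I$ satisfies both, I would first read off the base case directly: the first conjunct gives $\twotothef^{\I}(0) = 1 = 2^0$, establishing the claim at $0$.

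For the inductive step, suppose $\twotothef^{\I}(n) = 2^n$ for some $n \in \nat$. Instantiating the second conjunct at $k = n+1$, which satisfies the guard $k > 0$, and observing that $(n+1)-1 = n$ as integers, I obtain $\twotothef^{\I}(n+1) = 2 \cdot \twotothef^{\I}(n)$. The induction hypothesis then yields $\twotothef^{\I}(n+1) = 2 \cdot 2^n = 2^{n+1}$. By induction, $\twotothef^{\I}(n) = 2^n$ for every $n \in \nat$, which is precisely the assertion that $\twotothef^{\I}$ coincides with $\lambda x.\,2^x$ on the naturals.

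There is essentially no genuine obstacle here, as the recurrence in $\axiom{\full}{\twotothef}$ is simply the standard recursive definition of the powers of two. The only point deserving a word of care is that the axiom phrases the recurrence using the $\sigia$-term $k-1$, interpreted over the integers; one must note that for a natural number $k>0$ the value $k-1$ is again a natural number, so the induction stays within $\nat$ and the hypothesis applies at the right argument. I would also remark that the lemma deliberately says nothing about negative arguments, consistent with the fact that the intended semantics of $\twotothef$ is only pinned down for non-negative inputs.
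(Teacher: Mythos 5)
Your proof is correct and matches the paper's approach: the paper disposes of this lemma with the single line ``by the usual inductive definition of exponentiation,'' and your argument is exactly that induction carried out explicitly, with the base case and recurrence read off the two conjuncts of $\axiom{\full}{\twotothef}$.
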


\begin{proof}
By the usual inductive definition of exponentiation.
\end{proof}

\begin{lemma}
\label{concatlem}
Let $\I$ be an interpretation of $\tufia$ that satisfies
$\axiom{\full}{\twotothef} $.
Let $a$ be a bit-vector constant of bit-width $k$ and
 $n=\tonat{a}$. 
Then $\I$ satisfies the following equations:
\bi
\item $\tonat{\bvconcat{i}{a}}\teq\twotothe{k}\cdot i+\tonat{a}$ for any $0\leq i\leq 1$.
\item $n \mod \twotothe{k-1}\teq\tonat{\bvextract{a}{k-2}{0}}$
\item $n \div 2 \teq \tonat{\bvextract{a}{k-1}{1}}$
\ei

\end{lemma}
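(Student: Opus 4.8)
The plan is to reduce all three identities to elementary facts about the unsigned encoding $\tonat{\cdot}$, after using \Cref{fullpow} to pin down the behavior of the uninterpreted symbol $\twotothef$. Since $a$ is a \emph{constant} bit-vector of the concrete width $k$, each of $\tonat{a}$, $\tonat{\bvextract{a}{k-2}{0}}$, and $\tonat{\bvextract{a}{k-1}{1}}$ is a fixed natural number, and the only subterms of the three equations whose value depends on $\I$ are the exponentiation terms $\twotothe{k}$ and $\twotothe{k-1}$. By \Cref{fullpow}, $\twotothef^{\I}$ coincides with $\lambda x.2^{x}$ on the naturals, so under $\I$ these terms evaluate to $2^{k}$ and $2^{k-1}$, respectively. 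Hence it suffices to verify each identity as an equality of concrete natural numbers, using only the definition $\tonat{t}=\sum_{j=0}^{m-1}\bvselect{t}{j}\cdot 2^{j}$ together with the standard bit layouts of concatenation and extraction.

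For the first equation I would expand $\tonat{\bvconcat{i}{a}}$ directly. Treating $i$ as a one-bit vector ($0\leq i\leq 1$), the concatenation $\bvconcat{i}{a}$ has width $k+1$, with the bits of $a$ occupying positions $0,\dots,k-1$ and the bit $i$ occupying position $k$. Thus $\tonat{\bvconcat{i}{a}}=i\cdot 2^{k}+\sum_{j=0}^{k-1}\bvselect{a}{j}\cdot 2^{j}=2^{k}\cdot i+\tonat{a}$, which is the claim once $2^{k}$ is read back as $\twotothe{k}$.

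The remaining two equations are handled similarly by partitioning the sum defining $n=\tonat{a}=\sum_{j=0}^{k-1}\bvselect{a}{j}\cdot 2^{j}$. Reducing modulo $2^{k-1}$ removes exactly the top summand $\bvselect{a}{k-1}\cdot 2^{k-1}$, leaving $\sum_{j=0}^{k-2}\bvselect{a}{j}\cdot 2^{j}=\tonat{\bvextract{a}{k-2}{0}}$, which gives the second equation. For the third, integer division by $2$ discards the least significant summand and shifts the indices down by one: $n\div 2=\sum_{j=1}^{k-1}\bvselect{a}{j}\cdot 2^{j-1}=\sum_{j=0}^{k-2}\bvselect{a}{j+1}\cdot 2^{j}=\tonat{\bvextract{a}{k-1}{1}}$, since the $j$-th bit of $\bvextract{a}{k-1}{1}$ is $\bvselect{a}{j+1}$.

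There is no substantial obstacle here; the argument is purely computational. The only point requiring care is to keep the bit-indexing conventions consistent (LSB at index $0$, MSB at index $n-1$) across the definitions of $\tonat{\cdot}$, concatenation, and extraction, and to discharge every occurrence of the uninterpreted $\twotothef$ through \Cref{fullpow} so that the arithmetic can legitimately be carried out over $\mathbb{N}$.
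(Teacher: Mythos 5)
Your proof is correct and follows the same route as the paper, whose entire argument for this lemma is the one-line remark ``by the definition of $\tonat{\cdot}$ and \Cref{fullpow}''; you have simply carried out the bit-index computations that this remark leaves implicit. The expansions of concatenation, $\bmod\ 2^{k-1}$, and $\div\ 2$ against the sum $\sum_{j}\bvselect{a}{j}\cdot 2^{j}$ are all accurate, and discharging $\twotothef$ through \Cref{fullpow} before computing over $\nat$ is exactly the intended use of that lemma.
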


\begin{proof}
By the definition of $\tonat{\cdot}$ and \Cref{fullpow}.
\end{proof}

\begin{lemma}
\label{fullselect}
Let $\I$ be an interpretation of $\tufia$ that satisfies
$\axiom{\full}{\twotothef} $,
 $a$ be a bit-vector constant of bit-width $k$,
$0\leq i\leq k-1$
 and
 $n=\tonat{a}$.
Then
 ${\hselect{k}{i}{n}}^{\I}=\bvselect{a}{i}$, where $\hselect{k}{i}{n}$ is defined in \Cref{fullax}.

\end{lemma}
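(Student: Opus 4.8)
The plan is to evaluate ${\hselect{k}{i}{n}}^{\I}$ to a concrete arithmetic expression and then verify the claimed identity by inspecting the binary expansion of $n$. First I would apply \Cref{fullpow}: since $\I$ satisfies $\axiom{\full}{\twotothef}$, the function $\twotothef^{\I}$ agrees with $\lambda x.\,2^x$ on the naturals, so ${\hselect{k}{i}{n}}^{\I}$ evaluates to $(n \div 2^i) \bmod 2$, read with ordinary integer division and remainder. It then remains to show $(n \div 2^i)\bmod 2 = \bvselect{a}{i}$.

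The core is the standard fact that the $i$-th bit of a natural number is recovered by $(n \div 2^i)\bmod 2$. Writing $n = \tonat{a} = \sum_{j=0}^{k-1}\bvselect{a}{j}\cdot 2^j$, I would split the sum at position $i$ into a low part $L=\sum_{j=0}^{i-1}\bvselect{a}{j}\cdot 2^j$, the middle term $\bvselect{a}{i}\cdot 2^i$, and a high part $H=\sum_{j=i+1}^{k-1}\bvselect{a}{j}\cdot 2^j$. Two observations drive the computation: $L < 2^i$, because $L$ is the unsigned value of a width-$i$ bit-vector and hence $L \le 2^i - 1$; and $2^{i+1}\mid H$, because every exponent occurring in $H$ is at least $i+1$. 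Consequently $n \div 2^i = \bvselect{a}{i} + H/2^i$ with $H/2^i$ even, and taking the result modulo $2$ leaves $\bvselect{a}{i}\bmod 2 = \bvselect{a}{i}$, since $\bvselect{a}{i}\in\{0,1\}$.

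An alternative I would keep in reserve is induction on $i$ using \Cref{concatlem}. The base case $i=0$ is immediate, as $n \bmod 2 = \bvselect{a}{0}$ follows directly from the definition of $\tonat{\cdot}$. For the step, I would use $n \div 2^{i+1} = (n \div 2)\div 2^i$ together with the identity $n \div 2 = \tonat{\bvextract{a}{k-1}{1}}$ from \Cref{concatlem}, and then apply the induction hypothesis to the width-$(k-1)$ constant $\bvextract{a}{k-1}{1}$, whose $i$-th bit is $\bvselect{a}{i+1}$. This routes the argument entirely through lemmas already established, at the cost of a little more bookkeeping on the extract widths.

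I do not expect a genuine obstacle here: this is a routine property of positional binary notation, matching the terse style of the preceding lemmas. The only point demanding care is the integer-division bookkeeping, namely confirming that the bits strictly below position $i$ are annihilated by $\div\, 2^i$ (which rests on the strict bound $L<2^i$, i.e.\ $\sum_{j=0}^{i-1}2^j = 2^i-1$) and that the bits strictly above position $i$ contribute an even quantity. Both facts are elementary, so the proof reduces to recording these two bounds and invoking \Cref{fullpow} to discharge the $\twotothef$ applications.
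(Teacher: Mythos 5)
Your primary argument is correct but takes a genuinely different route from the paper's. The paper proves this lemma by induction on $k$: for $i>0$ it observes that $\bvselect{a}{i}$ is the $(i-1)$-th bit of the width-$(k{-}1)$ extract $\bvextract{a}{k-1}{1}$, uses \Cref{concatlem} to identify $\tonat{\bvextract{a}{k-1}{1}}$ with $n \div 2$, and then applies the induction hypothesis together with the identity $\hselect{k}{i}{n} = \hselect{k-1}{i-1}{n \div 2}$. That is exactly your ``alternative kept in reserve'' (your induction is phrased on $i$ rather than $k$, but since the recursion decrements both simultaneously the two are the same argument). Your main route instead evaluates $(n \div 2^i) \bmod 2$ directly from the positional expansion $n = L + \bvselect{a}{i}\cdot 2^i + H$, using $L \le 2^i - 1$ and $2^{i+1} \mid H$; after discharging the $\twotothef$ applications via \Cref{fullpow}, this closes the proof without any induction and without \Cref{concatlem}. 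Both are sound. The direct computation is more self-contained and makes the two numeric facts doing the work explicit; the paper's inductive version buys uniformity with the neighbouring lemmas (\Cref{concatlem}, \Cref{fulllem}), which all peel off one bit at a time, so the same decomposition of $a$ into its top bit and its low extract is reused throughout the appendix.
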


\begin{proof}
We prove the first item by induction on $k$. The second is proved similarly.
If $k=1$ then we must have $i=0$.
In this case,
$0\leq n\leq 1$ and
$\hselect{k}{i}{n}^{\I}=n=\tonat{\bvselect{a}{0}}$.
Suppose $k>1$.
If $i=0$ then this is shown similarly to the base case.
Otherwise,
$\bvselect{a}{i}$ is the $i-1$ bit of $\bvextract{a}{1}{k-1}$.
By the induction hypothesis and \Cref{concatlem}, the latter is equal to
$\hselect{k-1}{i-1}{\tonat{\bvextract{a}{1}{k-1}}}^{\I}=
\hselect{k-1}{i-1}{\tonat{a} \div 2}^{\I}=
\hselect{k}{i}{n}^{\I}
$.
\end{proof}

\begin{lemma}
\label{fulllem}
Let $a$ and $b$ be bit-vector constants of bit-width $k$,
 $\op\in\set{\&,\mid,\oplus}$, and
 $\I$ an interpretation of $\tufia$ that satisfies 
$
\axiom{\full}{\twotothef} 
\wedge \axiom{\full}{\op}
$.
Then $\I$ satisfies
$\equal{\tonat{a\bvsymbol{\op}b}}
{\natsymbol{\op}(k,\tonat{a},\tonat{b})}$.
\end{lemma}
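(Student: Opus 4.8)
The plan is to prove the equation by induction on the bit-width $k$, using Lemmas~\ref{fullpow}, \ref{concatlem}, and~\ref{fullselect} to relate the arithmetic in $\axiom{\full}{\op}$ to the bit-level structure of $a$ and $b$. I would carry out the argument for bitwise AND (the case $\op=\&$) in detail; the cases $\mid$ and $\oplus$ are completely analogous, the only change being that the single-bit combiner $\min$ is replaced by $\max$ (for $\intorf$) or $\abs{\cdot-\cdot}$ (for $\intxorf$). Throughout I use that, since bits range over $\{0,1\}$, on single bits $\min$ computes AND, $\max$ computes OR, and $\abs{\cdot-\cdot}$ computes XOR, each verified by a three-case check.

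For the base case $k=1$, the guard $k>1$ in $\axiom{\full}{\intandf}$ is false and $\twotothe{0}^{\I}=1$ by Lemma~\ref{fullpow}, so the axiom collapses to $\intand{1}{\tonat{a}}{\tonat{b}} = \min(\hselect{1}{0}{\tonat{a}}, \hselect{1}{0}{\tonat{b}})$. By Lemma~\ref{fullselect}, $\hselect{1}{0}{\tonat{a}}^{\I} = \bvselect{a}{0}$ and similarly for $b$, and since $a,b$ are single bits $\tonat{a}=\bvselect{a}{0}$ and $\tonat{b}=\bvselect{b}{0}$. Hence the right-hand side is the AND of the two bits, i.e.\ $\tonat{a \bvandf b}$.

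For the inductive step $k>1$, I would instantiate $\axiom{\full}{\intandf}$ at $(k,\tonat{a},\tonat{b})$ to obtain
\begin{equation*}
\intand{k}{\tonat{a}}{\tonat{b}} = \intand{k-1}{\tonat{a} \mod \twotothe{k-1}}{\tonat{b} \mod \twotothe{k-1}} + \twotothe{k-1}\cdot\min(\hselect{k}{k-1}{\tonat{a}}, \hselect{k}{k-1}{\tonat{b}}).
\end{equation*}
By Lemma~\ref{concatlem}, $\tonat{a}\mod\twotothe{k-1} = \tonat{\bvextract{a}{k-2}{0}}$ and likewise for $b$, so the first summand is $\intand{k-1}{\tonat{\bvextract{a}{k-2}{0}}}{\tonat{\bvextract{b}{k-2}{0}}}$; the induction hypothesis, applied to the $(k{-}1)$-bit constants $\bvextract{a}{k-2}{0}$ and $\bvextract{b}{k-2}{0}$, rewrites this as $\tonat{\bvextract{a}{k-2}{0}\bvandf\bvextract{b}{k-2}{0}}$. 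By Lemma~\ref{fullselect}, $\hselect{k}{k-1}{\tonat{a}}^{\I}=\bvselect{a}{k-1}$ and $\hselect{k}{k-1}{\tonat{b}}^{\I}=\bvselect{b}{k-1}$, so the $\min$ term is the AND of the two most significant bits, scaled by $\twotothe{k-1}^{\I}=2^{k-1}$ via Lemma~\ref{fullpow}. Since $\tonat{\cdot}$ splits any $k$-bit value as its low $k{-}1$ bits plus $2^{k-1}$ times its MSB, and both the low part and the MSB of $a\bvandf b$ are the bitwise AND of the corresponding parts of $a$ and $b$, the right-hand side equals $\tonat{a\bvandf b}$, completing the step.

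I expect the only slightly delicate point to be the bookkeeping of the MSB/LSB decomposition: one must confirm that the recursive shape of $\axiom{\full}{\op}$ --- a low part plus $2^{k-1}$ times a top-bit combiner --- matches the corresponding decomposition of $\tonat{\cdot}$ from its definition, and that the extract terms to which the induction hypothesis is applied are genuinely $(k{-}1)$-bit constants. Both follow immediately from the definition of $\tonat{\cdot}$ and the three supporting lemmas, so once Lemmas~\ref{fullpow}--\ref{fullselect} are in hand the argument is essentially routine verification rather than a source of real difficulty.
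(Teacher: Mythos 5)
Your proposal is correct and follows essentially the same route as the paper's own proof: induction on $k$ for the case $\op=\&$ (with the other operators handled analogously), using \Cref{fullselect} for the most significant bit, \Cref{concatlem} together with the induction hypothesis for the low $k-1$ bits, and the MSB/low-part decomposition of $\tonat{\cdot}$ to reassemble the result. The only cosmetic difference is that the paper phrases the final recombination via the concatenation identity of \Cref{concatlem} rather than appealing directly to the definition of $\tonat{\cdot}$, which is the same fact.
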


\begin{proof}
We prove the lemma for the case where $\op$ is $\&$ by induction on $k$.
The other cases are shown similarly.
If $k=1$, then by \Cref{fullpow,fullselect} we have 
$\tonat{\bvand{a}{b}}^{\I}=\tonat{\min(a,b)}=\min(a,b)=\intand{1}{\tonat{a}}{\tonat{b}}$.
Now suppose $k>1$.
Then,
$$
\begin{array} {l}
\intand{k}{\tonat{a}}{\tonat{b}}^{\I} = \\
(\intand{k-1}{\tonat{a} \mod 
\twotothe{k-1}}{
\tonat{b} \mod 
\twotothe{k-1}}+ \\
\twotothe{k-1}\cdot
\min(\hselect{k}{k-1}{a},
\hselect{k}{k-1}{b}))^{\I}
\end{array}
$$

By \Cref{fullselect}, we have that
$$\min(\hselect{k}{k-1}{a},
\hselect{k}{k-1}{b})=
\tonat{\bvand{\bvselect{a}{k-1}}{\bvselect{b}{k-1}}}
$$
By the induction hypothesis and \Cref{concatlem},
 we obtain that
 $\intand{k}{\tonat{a}}{\tonat{b}}$ is equal in $\I$ to
$$
\tonat{\bvand{\bvextract{a}{k-2}{0}}{\bvextract{b}{k-2}{0}}}  + 
\twotothe{k-1}\cdot \tonat{\bvand{\bvselect{a}{k-1}}{\bvselect{b}{k-1}}}
$$
which
by \Cref{concatlem} is equal in $\I$ to 
$$\tonat{\bvconcat{\bvand{\bvselect{a}{k-1}}{\bvselect{b}{k-1}}}{\bvand{\bvextract{a}{k-2}{0}}{\bvextract{a}{k-2}{0}}}}=
\tonat{\bvand{a}{b}}$$

\end{proof}

\begin{lemma}
\label{interlem}
${\runconvert(t,\omega)}^{\ufiai{\I}}=\tonat{{\instpar{t}{\omega}{\J}}^{\bvi{\I}}}$
for any parametric $\sigfsbv$-term $t$.
\end{lemma}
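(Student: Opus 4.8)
The plan is to prove \Cref{interlem} by structural induction on the parametric $\sigfsbv$-term $t$, using the construction of $\ufiai{\I}$ together with \Cref{fullpow,concatlem,fullselect} to evaluate the arithmetic produced by $\runelim$, and deferring the genuinely inductive reasoning about the bitwise operators to \Cref{fulllem}.

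The two base cases follow directly from the definition of $\ufiai{\I}$. If $t=\parvar\in\parvars$, then $\runconvert(t,\omega)=\maptointvar(\parvar)$, and by construction $\maptointvar(\parvar)^{\ufiai{\I}}=\tonat{x_c^{\bvi{\I}}}$ with $c=\omegabw(\parvar)^{\J}$; since $\instpar{\parvar}{\omega}{\J}$ is exactly the fixed-size variable $x_c$, this equals $\tonat{\instpar{t}{\omega}{\J}^{\bvi{\I}}}$. If $t=\parconst\in\parconsts$, then $\runconvert(t,\omega)=\omegaval(\parconst)\mod\twotothe{\omegabw(\parconst)}$. Because $\ufiai{\I}$ agrees with $\J$ on $\fv{\omega}$, we have $\omegabw(\parconst)^{\ufiai{\I}}=k$ for $k=\omegabw(\parconst)^{\J}$, so \Cref{fullpow} gives $\twotothe{\omegabw(\parconst)}^{\ufiai{\I}}=2^{k}$ and the whole expression evaluates to $\omegaval(\parconst)^{\J}\mod 2^{k}$. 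This is precisely the unsigned value $\tonat{\cdot}$ of the bit-vector constant of sort $\sortbv{k}$ that $\instpar{\parconst}{\omega}{\J}$ denotes.

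For the inductive step $t=\bvsymbol{f}(t_1,\ldots,t_n)$, I would first apply the induction hypothesis to obtain $\runconvert(t_i,\omega)^{\ufiai{\I}}=\tonat{\instpar{t_i}{\omega}{\J}^{\bvi{\I}}}$ for each $i$. Well-sortedness of $t$ under $\omega$ ensures the widths line up: for every operator other than concatenation all arguments share the width $k=\omegabw(t_n)^{\J}$, which is exactly the bit-width passed as the first argument of $\natsymbol{f}$ by $\runconvert$ (see \Cref{fig:translation}). The argument then splits on $f$. When $f$ is a bitwise operator ($\bvandf$, $\bvorf$, or $\bvxorf$), the symbol $\natsymbol{f}$ survives $\runelim$, and the required identity is exactly \Cref{fulllem} instantiated at the bit-vector values of the arguments. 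For every remaining operator, $\runelim$ rewrites the application to a pure arithmetic expression over $+,-,\cdot,\div,\mod$ and $\twotothef$ (for instance $\intmul{k}{x}{y}$ becomes $(x\cdot y)\mod\twotothe{k}$ and $\intconcat{k}{x}{y}$ becomes $x\cdot\twotothe{k}+y$); it then remains to check that this expression, evaluated at the natural-number values of the arguments and with $\twotothe{k}$ read as $2^{k}$ via \Cref{fullpow}, equals $\tonat{\cdot}$ of the corresponding fixed-size result. Each such check is a direct appeal to the \smtlib semantics of the operator, with \Cref{concatlem} supplying the factors of $2^{k}$ needed for the shift and concatenation cases.

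The routine-but-delicate part, and the only real obstacle, is this last case analysis: one must verify that every $\runelim$ rule faithfully reproduces the modular fixed-size semantics, taking care with the boundary behaviours that \smtlib fixes explicitly---division and remainder by zero (mapped to $\twotothe{k}-1$), shift amounts meeting or exceeding the width, and the interaction of the modulo-$2^{k}$ reduction with each operation. Since the inductive content for the bitwise operators is already isolated in \Cref{fulllem}, no further induction on $k$ is required here, and the proof reduces to an operator-by-operator confirmation of these modular identities.
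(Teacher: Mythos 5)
Your proposal matches the paper's proof: both proceed by structural induction on $t$, handle the $\parvars$ and $\parconsts$ base cases directly from the construction of $\ufiai{\I}$, defer the bitwise operators $\bvandf$, $\bvorf$, $\bvxorf$ to \Cref{fulllem}, and verify the remaining operators by checking the $\runelim$ arithmetic against the \smtlib semantics via \Cref{fullpow} and \Cref{concatlem}. Your explicit attention to the boundary cases (division by zero, shift overflow) is a reasonable elaboration of what the paper leaves as "follows by the semantics of the various operators."
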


\begin{proof}
First notice that 
for any $x\in\parvars\cup\parconsts$ we have that
$\omegabw(x)^{\ufiai{\I}}=\omegabw(x)^{\J}$
and
$\omegaval(x)^{\ufiai{\I}}=\omegaval(x)^{\J}$.
Using induction, the same holds for any parametric $\sigfsbv$-term $t$.

We prove the lemma by induction on $t$.
\bi
\item If $t$ is $x$ for some $x\in\parvars$: follows from the definition of $\ufiai{\I}$ for this case.
\item If $t$ is $z$ for some $z\in\parconsts$: 
$$
\begin{array}{l}
{\runconvert(t,\omega)}^{\ufiai{\I}}= \\
(\omegaval(z) \mod \twotothe{\omegabw(z)})^{\ufiai{\I}}= \\
\omegaval(z)^{\J} \mod \twotothe{\omegabw(z)^{\J}}
\end{array}
$$
Now,
$\instpar{z}{\omega}{\J}$ is the bit-vector constant of width
$k=\omegabw(z)^{\J}$ whose integer value
is
$\omegaval(z)^{\J}\mod 2^{k}$.
\item If $t$ is constructed from an operator other than $\bvandf$, $\bvorf$, $\bvxorf$, 
then this follows by the semantics of the various operators as it is defined 
in the \smtlib standard. 
We explicitly show that case where $t=\bvadd{t_{1}}{t_{2}}$. 
%Other cases are handled similarly.
In this case,
$${\runconvert(t,\omega)}=
(\runconvert(t_{1},\omega) + \runconvert(t_{2}, \omega)) \mod \twotothe{\omegabw(t_{2})}$$ 
which by \Cref{fullpow} is interpreted 
in $\ufiai{\I}$ as
$$((\runconvert(t_{1},\omega))^{\ufiai{\I}} + (\runconvert(t_{2}, \omega))^{\ufiai{\I}}) \mod 2^k$$ for
$k=\omegabw(t_{2})^{\J}$.
By the induction hypothesis, the latter is equal to
$$(\tonat{{\instpar{t_{1}}{\omega}{\J}}^{\bvi{\I}}} +
\tonat{{\instpar{t_{2}}{\omega}{\J}}^{\bvi{\I}}})
\mod 
2^k
$$
which by the semantic of $\bvaddf$ as defined in the
\smtlib standard, is equal to 
$\tonat{(\instpar{(\bvadd{t_{1}}{t_{2}})}{\omega}{\J})^{\bvi{\I}}}$.
\item The operators $\bvandf$, $\bvorf$ and $\bvxorf$ rely on \Cref{fulllem}, rather than on the \smtlib standard. 
We explicitly show the case where 
$t=\bvand{t_{1}}{t_{2}}$. 
%The other cases are shown similarly.
In this case,
$${\runconvert(t,\omega)}^{\ufiai{\I}}=
(\intandf(\omegabw(t_{2}),\runconvert(t_{1},\omega),\runconvert(t_{2}, \omega)))^{\ufiai{\I}}$$ 
which by the induction hypothesis is equal to 

$$
(\intandf(\omegabw(t_{2}),
\tonat{{\instpar{t_{1}}{\omega}{\J}}^{\bvi{\I}}},
\tonat{{\instpar{t_{2}}{\omega}{\J}}^{\bvi{\I}}}))^{\ufiai{\I}}
$$ 
By \Cref{fulllem}, we obtain 
%$$
%\tonat{\bvand{{{\instpar{t_{1}}{\omega}{\J}}^{\bvi{\I}}}}{{{\instpar{t_{2}}{\omega}{\J}}^{\bvi{\I}}}}}
%$$
$$
\tonat{(\instpar{(\bvand{t_{1}}{t_{1}})}{\omega}{\J})^{\bvi{\I}}}
$$
\ei
\end{proof}

\begin{corollary}
$\bvi{\I}$ satisfies $\varphi$ iff $\ufiai{\I}$ satisfies $\trfull{\varphi}$.
\end{corollary}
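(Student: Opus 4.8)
The plan is to derive the corollary as the formula-level consequence of the term-level correspondence already established in Lemma~\ref{interlem}, namely ${\runconvert(t,\omega)}^{\ufiai{\I}}=\tonat{{\instpar{t}{\omega}{\J}}^{\bvi{\I}}}$ for every parametric $\sigfsbv$-term $t$. Since that identity, together with the auxiliary Lemmas~\ref{fullpow}--\ref{fulllem}, already carries the entire arithmetic content, the only new obligations are: (i) to discharge the antecedent $\aaxiom{\full}(\varphi,\omega)$ of $\trfull{\varphi}$ in the specific interpretation $\ufiai{\I}$, so that satisfying $\trfull{\varphi}$ becomes equivalent to satisfying its conclusion $\runconvert(\varphi,\omega)$; and (ii) to lift the term correspondence through the atomic and Boolean structure of $\varphi$.

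First I would verify $\ufiai{\I}\models\aaxiom{\full}(\varphi,\omega)$. The four operator conjuncts $\axiom{\full}{\twotothef}$, $\axiom{\full}{\intandf}$, $\axiom{\full}{\intorf}$, and $\axiom{\full}{\intxorf}$ hold by the very definition of $\ufiai{\I}$, which fixes these symbols to satisfy their full axiomatizations. The positivity conjunct $\bigwedge_{w\in\fv{\omega}}w>0$ holds because $v^{\ufiai{\I}}=v^{\J}$ for $v\in\vec{v}=\fv{\omega}$ and $\J$ maps each such variable to a positive integer. For the range conjunct, $\maptointvar(x)^{\ufiai{\I}}=\tonat{x_{c}^{\bvi{\I}}}$ with $c=\omegabw(x)^{\J}$ lies in $\{0,\dots,2^{c}-1\}$ by definition of $\tonat{\cdot}$, and Lemma~\ref{fullpow} gives that $\twotothe{\omegabw(x)}$ is interpreted as $2^{c}$ in $\ufiai{\I}$ (using $\omegabw(x)^{\ufiai{\I}}=\omegabw(x)^{\J}$); hence $0\leq\maptointvar(x)<\twotothe{\omegabw(x)}$ is satisfied. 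With the antecedent discharged, it remains to prove $\bvi{\I}\models\instpar{\varphi}{\omega}{\J}$ iff $\ufiai{\I}\models\runconvert(\varphi,\omega)$.

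I would prove this equivalence by induction on $\varphi$. The Boolean cases are immediate, since $\runconvert$ commutes with $\w,\v,\Ra,\neg,\Leftrightarrow$ and satisfaction is defined inductively over them. For an equality atom $t_1\teq t_2$, well sorted under $\omega$ so that both sides have a common width, I would use that $\tonat{\cdot}$ is a bijection on bit-vectors of a fixed width: the instance holds in $\bvi{\I}$ iff the two unsigned values agree, which by Lemma~\ref{interlem} is iff ${\runconvert(t_1,\omega)}^{\ufiai{\I}}={\runconvert(t_2,\omega)}^{\ufiai{\I}}$. For a predicate atom, conversion followed by $\runelim$ yields an integer relation on the converted arguments: for the unsigned comparisons this is precisely unsigned-value ordering, matching the bit-vector semantics; for the signed comparisons it is the $\utsf$-based form, where I would invoke correctness of $\utsf$, namely $\uts{k}{\tonat{a}}=\toint{a}$, so that e.g.\ $\intslt{k}{u}{v}$ holds iff $\uts{k}{u}<\uts{k}{v}$ coincides with signed bit-vector ordering. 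Combined with Lemma~\ref{interlem} this settles the atomic cases, and composing the resulting equivalence with the discharge of the antecedent yields the corollary.

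Because the preceding lemmas bear all of the arithmetic weight --- in particular the bit-width induction of Lemma~\ref{fulllem} and the bit-extraction identity of Lemma~\ref{fullselect} --- the corollary itself is largely assembly. The one point requiring care is the atomic predicate case: one must confirm that every bit-vector relation translates to an integer relation with the same truth value, which for the signed comparisons depends on the correctness of the $\utsf$ encoding rather than on the SMT-LIB semantics directly. I expect that to be the main, though modest, obstacle.
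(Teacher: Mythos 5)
Your proposal is correct and follows essentially the same route as the paper, whose proof of this corollary is simply a routine induction on $\varphi$ with the base cases supplied by \Cref{interlem}. You additionally make explicit two points the paper leaves implicit --- that $\ufiai{\I}$ satisfies the antecedent $\aaxiom{\full}(\varphi,\omega)$ (needed for the ``only if'' direction, since $\trfull{\varphi}$ is an implication) and that the signed-comparison atoms rest on the identity $\uts{k}{\tonat{a}}=\toint{a}$ --- both of which check out.
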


\begin{proof}
By routine induction on $\varphi$, where the base cases follow from \Cref{interlem}.
\end{proof}

\subsection{Properties 2 -- 4}
The rest of the properties follow from Property 1
 by showing that the axioms in \Cref{partialax}
are valid in every interpretation of $\tufia$ that satisfies 
$\aaxiom{\full}(\varphi,\omega)$.
The axioms of $\twotothef$ easily follow from simple properties of exponentiation.
As for the bitwise logical operators, 
we explicitly prove the validity of ``difference" for $\intandf$. The rest are shown similarly.
Let $k>0$, $0\leq x,y,z < 2^{k}$,
and $\I$ an interpretation of $\tufia$ that satisfies $\aaxiom{\full}(\varphi,\omega)$.
Note that $k$, $x$, $y$ and $z$ are bound as integers, and are therefore interpreted as themselves in $\I$.
Let $a$, $b$ and $c$ be bit-vectors of width $k$ such that
$x=\tonat{a}$, $y=\tonat{b}$ and $z=\tonat{c}$.
Suppose $\I$ satisfies $x\tneq y$. 
Then $a\tneq b$, and so there exists
 some $0\leq i\leq k-1$ such that
$\dist{\bvselect{a}{i}}{\bvselect{b}{i}}$.
First suppose $\bvselect{a}{i}=0$.
Then $\bvselect{b}{i}=1$ and $\bvselect{(\bvand{a}{c})}{i}=0$, and hence
$\dist{\bvand{a}{c}}{b}$.
This means that 
$\dist{\tonat{\bvand{a}{c}}}{\tonat{b}}$, and thus by
By \Cref{fulllem} we have that $\I$ satisfies
$\dist{{\intand{k}{x}{z}}}{y}$.
Otherwise, $\bvselect{a}{i}=1$ and $\bvselect{b}{i}=0$.
If $\equal{\bvand{a}{c}}{b}$ then
$\bvselect{c}{i}=0$, which means that 
$\bvselect{(\bvand{b}{c})}{i}=0$ and so
$\dist{\bvand{b}{c}}{a}$.
This means that 
$\dist{\tonat{\bvand{b}{c}}}{\tonat{a}}$, and thus by 
\Cref{fulllem} we have that $\I$ satisfies
$\dist{{\intand{k}{y}{z}}}{x}$.

\end{appendix}

\end{document}